\numberwithin{equation}{section}
\newcommand{\R}{\mathbb{R}}
\newcommand{\N}{\mathbb{N}}
\newcommand{\C}{\mathbb{C}}
\newcommand{\wt}[1]{\widetilde{#1}}
\newcommand{\dom}{\operatorname{dom}}
\newcommand{\ran}{\operatorname{ran}}
\theoremstyle{definition}
\newtheorem{defn}{Definition}
\newtheorem{lem}{Lemma}
\newtheorem*{rem}{Remark}
\newtheorem{thm}{Theorem}
\newtheorem{exmp}{Example}
\title[Coupled Supersymmetry]{Coupled Supersymmetry and \\ 
Ladder Structures Beyond the Harmonic Oscillator}
\author{Cameron L. Williams}
\address{
Department of Mathematics \\ 
University of Houston \\ 
Houston, TX 77204-3008, USA}
\author{Nikhil N. Pandya}
\address{
Departments of Physics and Mathematics \\
University of Houston \\ 
Houston, TX 77204-5005, USA}
\author{Bernhard G. Bodmann}
\address{
Department of Mathematics \\ 
University of Houston \\ 
Houston, TX 77204-3008, USA}
\author{Donald J. Kouri}
\address{
Departments of Physics, Mathematics and Mechanical Engineering \\
University of Houston \\ 
Houston, TX 77204-5005, USA}
\date{}
\begin{document}

\begin{abstract}
The development of supersymmetric (SUSY) quantum mechanics has shown that some of the insights based on the algebraic properties of ladder operators related to the quantum mechanical harmonic oscillator carry over to the study of more general systems. At this level of generality, pairs of eigenfunctions of so-called partner Hamiltonians are transformed into each other, but the entire spectrum of any one of them cannot be deduced from this intertwining relationship in general---except in special cases. In this paper, we present a more general structure that provides all eigenvalues for a class of Hamiltonians that do not factor into a pair of operators satisfying canonical commutation relations. Instead of a pair of partner Hamiltonians, we consider two pairs that differ by an overall shift in their spectrum. This is called coupled supersymmetry. In that case, we also develop coherent states and present some uncertainty principles which generalize the Heisenberg uncertainty principle. Coupled SUSY is explicitly realized by an infinite family of differential operators which admit orthonormal bases of eigenfunctions of generalized harmonic oscillators.
\end{abstract}

\maketitle

\section{Introduction}

The quantum mechanical harmonic oscillator (QMHO) is an important part of quantum theory. It has close connections to classical mechanics through an associated family of coherent states that minimize the Heisenberg uncertainty principle for the position and momentum operators. Often, the harmonic oscillator is used as an approximation to describe quantum systems that oscillate about an equilibrium position. It is also fundamental in quantum field theory and related areas like BCS superconductivity in solid state theory \cite{haken}. In the position representation, measuring energy in units of $\hbar \omega$, where $2\pi \hbar$ is Planck's constant and $\omega$ is the angular frequency of the oscillator, the Hamiltonian of the QMHO is the Schr{\"o}dinger operator
%
$
    \mathcal{H}_{\text{HO}} = -\frac{1}{2}\frac{d^2}{dx^2} + \frac{1}{2}x^2 .
$
The domain of this operator is most easily described in terms of the sequence of Hermite functions $\{h_k\}_{k=0}^\infty$ which form an orthonormal basis of eigenfunctions for the Hilbert space $L^2(\mathbb R)$. Each function $\psi$ in the domain of $ \mathcal{H}_{\text{HO}}$ is given by a series expansion $\psi= \sum_{k=0}^\infty c_k h_k$ whose sequence of coefficients $\{c_k\}_{k=0}^\infty$ is such that the norm obtained after applying $ \mathcal{H}_{\text{HO}}$ is finite, $\sum_{k=0}^\infty k^2 |c_k|^2 < \infty$.

The eigenfunctions of the oscillator Hamiltonian are also eigenfunctions of the Fourier transform, which can be obtained from the time evolution generated by $\mathcal{H}_{\text{HO}}$. In an earlier paper \cite{williams}, three of the authors investigated generalized Fourier transforms that emerged from an axiomatic framework. These were identified as the unitary transforms that diagonalized certain singular Laplacian operators $\Delta_n = -\frac{d}{dx}\frac{1}{x^{2n-2}}\frac{d}{dx}$. The motivation for the present paper came from the question whether these transforms are related to a type of oscillator Hamiltonian. This is indeed the case. If $a_n = \frac{1}{\sqrt{2}} \left(\frac{1}{x^{n-1}} \frac{d}{dx} + x^n\right)$, then $\mathcal H = a_n^* a_n$ has the set of eigenvalues $\{2kn, 2kn+2n-1\}_{k=0}^\infty$ and the corresponding eigenfunctions are complete in $L^2(\R)$. The present paper derives the spectrum of this operator with an algebraic method that extends supersymmetric (SUSY) quantum mechanics to a setting in which \emph{ two } pairs of partner Hamiltonians are not related, except in the case of $n=1$. These generalized oscillator Hamiltonians include a generalized Laplacian which was the study of anomalous and normal diffusion \cite{diffusion} as well as a potential of the form $x^{2n}$ which can be realized as the approximate potential in potential wells. 

The theory of supersymmetric quantum mechanics relates eigenfunctions and the corresponding eigenvalues between two partner Hamiltonians by an intertwining relationship with so-called charge operators \cite{BDN92}. In general, this allows the transfer of knowledge of eigenfunctions from one Hamitonian to its partner, but does not characterize the set of eigenvalues as a whole.
In the case of the harmonic oscillator, the partner Hamiltonians are related by a shift in the spectrum, which allows the retrieval of all eigenvalues. The present paper shows algebraic conditions which provide the full benefit of the ladder structure for a class of systems that we call coupled SUSY which acts as a new, elementary generalization of the harmonic oscillator. Generalizing the harmonic oscillator algebra in various ways is not a new concept. Many such generalizations involve $q$-deformations \cite{bonatsos, Brzezinski, Kosinski1997, Kwek}, commutators involving powers of operators or deformed commutators \cite{Kibler, pujari, SADEGHNEZHAD2016555} that are of interest in quantum gravity, work with structure functions \cite{kyemba}, or involve general energy eigenvalue structures \cite{zelaya}.

The remainder of this paper is organized as follows. In Section 2, we develop the {coupled SUSY} structure which expands the relationship among the QMHO, SUSY, and the corresponding coupled SUSYs. In Section 3, we establish eigenvalues and eigenfunctions of the corresponding coupled SUSY. In Section 4, we establish the connections between coupled SUSY and other oscillator systems, namely Schwinger's approach to the two-particle QMHO system, and the $\mathfrak{su}(1,1)$ approach to the QMHO. In Section 5, we develop the coherent states for coupled SUSY. Due to the algebraic structure of coupled SUSY, a more complex coherent state structure exists than what has been found previously for either SUSY or the QMHO. In Section 6, we derive some uncertainty principles associated to coupled SUSY via ladder operators, generalizing the traditional Heisenberg uncertainty principle. In doing so, we discover new quantum operators corresponding to the classical Lagrangian and dilation. In Section 7, we show that harmonic oscillators can be realized as special classes of coupled SUSYs, suggesting that coupled SUSYs may have utitlity in quantum field theory and elementary particles.

\section{Coupled SUSY and the \texorpdfstring{$\mathfrak{su}(1,1)$}{TEXT} Lie algebra}

We first review the algebraic derivation of the eigenvalues of the harmonic oscillator. Defining the operator $a = \frac{1}{\sqrt{2}} \left(\frac{d}{dx}+x\right)$ and its adjoint $a^* = \frac{1}{\sqrt{2}} \left(-\frac{d}{dx}+x\right)$ on a sufficiently nice subspace of $L^2(\R)$, the Hamiltonian is expressed as
\begin{equation}
    \mathcal{H}_{\text{HO}} = a^*a+\frac{1}{2} = aa^*-\frac{1}{2}.
\end{equation}
 
 The commutation relation  $ [a,a^*] \equiv aa^*-a^*a = 1$  shows that the triple $\{a,a^*,1\}$ generates a Lie algebra. From this it follows that $[\mathcal{H}_{\text{HO}},a^*] = a^*$ and $[\mathcal{H}_{\text{HO}},a] = -a$. Therefore $a^*$ and $a$ are ladder operators for $\mathcal{H}_{\text{HO}}$, i.e. if $\psi$ is an eigenfunction of $\mathcal{H}_{\text{HO}}$, then $a^* \psi$ and $a\psi$ are either annihilated or are also eigenfunctions of $\mathcal{H}_{\text{HO}}$ with a different eigenvalue. We call $a$ the lowering operator for $\mathcal{H}_{\text{HO}}$ and $a^*$  the raising operator for $\mathcal{H}_{\text{HO}}$---note that $a^*$ does not annihilate any state. 

It is straightforward to verify that $h_0(x) = \frac{1}{\pi^{1/4}}\exp(-x^2/2)$ provides a normalized eigenfunction of $a^*a$. Moreover, the existence of ladder operators guarantees that for $m\in\N_0$, $(a^*)^m h_0$ is also an eigenfunction of $a^*a$. Normalizing these eigenfunctions yields the orthonormal basis of $L^2(\mathbb R)$ given by the Hermite functions, so the entire spectrum is characterized in terms of the ladder structure \cite[p. 145]{galindo}.

The theory of supersymmetric quantum mechanics (SUSY) generated insights from the factorization method for  more general quantum mechanical systems   \cite{cooper}. SUSY has found its way into many applications, including differential geometry \cite{Witten82}, the study of coherent states \cite{kouri2,david} and BCS theory \cite{yanagisawa}. In SUSY, one assumes that the ground state energy of the quantum mechanical system is zero since ultimately one only measures energy changes, not absolute energy values, and seeks to write a Hamiltonian as
\begin{equation}
    \mathcal{H}_1 = -\frac{1}{2}\frac{d^2}{dx^2} + V(x) = \frac{1}{2}\left(-\frac{d}{dx}+W_1(x)\right) \left(\frac{d}{dx}+W_1(x)\right).
\end{equation}

\noindent This leads to the Riccati equation $V(x) = \frac{1}{2} W_1(x)^2-\frac{1}{2}W_1'(x).$ In general such differential equations have no closed form solution. If however the (nodeless) ground state wavefunction, $\psi$, is known for the Hamiltonian, one may define $W_1$ by
\begin{equation}
    W_1(x) = -\frac{d}{dx}\log\psi(x).
\end{equation}

Such a choice of $W_1$ solves the Riccati equation. One then defines the so-called ``charge'' operators
\begin{align}
    \mathcal{Q}_1 &= \frac{1}{\sqrt{2}}\left(\frac{d}{dx} + W_1(x)\right), \\
    \mathcal{Q}_1^* &= \frac{1}{\sqrt{2}} \left(-\frac{d}{dx} + W_1(x)\right).
\end{align}

In the QMHO case, $W_1(x) = x$ and $\mathcal{Q}$ and $\mathcal{Q}^*$ are the usual raising and lowering operators. Thus SUSY leads to a theory which somewhat mirrors that of the QMHO: the ground states are Gaussian-like with an integral of $-W_1$ in the exponential, there are associated uncertainty principles which have the ground states as minimizers \cite{kouri2}, and there is a ladder structure of sorts.

The SUSY ladder structure is generally more complicated than that of the traditional QMHO, because $[\mathcal{Q}_1,\mathcal{Q}_1^*] = 1$ if and only if $W_1(x) = x+\alpha$, where $\alpha\in \mathbb R$ is a constant that shifts the position. For general $W_1$, one instead constructs a secondary (partner) Hamiltonian $\mathcal{H}_2$ by 
\begin{equation}
    \mathcal{H}_2 = \mathcal{Q}_1 \mathcal{Q}_1^* = -\frac{1}{2}\frac{d^2}{dx^2} + \frac{1}{2} W_1(x)^2 + \frac{1}{2} W_1'(x).
\end{equation}
For such a choice of $\mathcal{H}_2$, it is clear that $\mathcal{Q}_1 \mathcal{H}_1 = \mathcal{H}_2 \mathcal{Q}_1$ (and similarly $\mathcal{H}_1 \mathcal{Q}_1^* = \mathcal{Q}_1^* \mathcal{H}_2$). Thus $\mathcal{Q}_1$ and $\mathcal{Q}_1^*$ act as intertwining operators for $\mathcal{H}_1$ and $\mathcal{H}_2$ so that an eigenfunction of one Hamiltonian can be converted to an eigenfunction of the other Hamiltonian with identical eigenvalue via these operators (except perhaps for the $0$ eigenvalue).

If $\phi$ is a nodeless ground state eigenfunction of $\mathcal{H}_2$, then one can define a $W_2$ much in the same way as $W_1$ to generate new charge operators $\mathcal{Q}_2$ and $\mathcal{Q}_2^*$ and so on to generate a hierarchy of Hamiltonians that are intertwined via a sequence of charge operators.

Since the charge operators convert between eigenfunctions of the Hamiltonians, knowing the ground state eigenfunctions of the hierarchy of all of the Hamiltonians gives all of the excited states of the original Hamiltonian $\mathcal{H}_1$. However, one may not have exact solutions for the ground states of the various SUSY partners. Approximate eigenvalues can then be obtained via variational techniques \cite{kouri1}. In the exceptional case of the QMHO, all excited states are determined only using $\mathcal{Q}_1$ and $\mathcal{Q}_1^*$. It is therefore of interest to find a more general QMHO-like structure which carries the full benefit of a true ladder structure.

We saw that the QMHO has two separate factorizations: $\mathcal{H}_{\text{HO}} = a^*a + \frac{1}{2}$ and $\mathcal{H}_{\text{HO}} = aa^*-\frac{1}{2}$. Defining then the SUSY Hamiltonian $\mathcal{H}_1 = a^*a$, its SUSY partner Hamiltonian is $\mathcal{H}_2 = aa^*$. From the two factorizations of the QMHO, it is clear that $\mathcal{H}_1$ and $\mathcal{H}_2$ each have two distinct factorizations.

That $\mathcal{H}_2 = \mathcal{H}_1+1$ is a restatement of the commutation relation for $a$ and $a^*$---which is equivalent to the canonical commutation relation. This cannot serve as a point of generalization as the canonical commutation relation is too rigid \cite[p. 274]{reed2}. Instead, we use the property that the QMHO and its partner Hamiltonian both have two distinct factorizations to develop our theory and this leads into our first definition.

\begin{defn}
Let $\mathfrak{H}_1$ and $\mathfrak{H}_2$ be Hilbert spaces, $a:\mathfrak{H}_1\to\mathfrak{H}_2$ and $b:\mathfrak{H}_2\to\mathfrak{H}_1$ be closed, densely defined operators, $a^*$ and $b^*$ be their adjoints, and $\gamma,\delta\in\R\setminus\{0\}$ with $\gamma<\delta$. Furthermore, suppose that $\dom a = \dom b^*$, $\dom a^* = \dom b$, $\ran a\subseteq \dom a^*$ and vice versa so that the products $a^*a$ and $aa^*$ are well-defined, and similarly for $b$ and $b^*$. The ordered quadruplet $\{a,b,\gamma,\delta\}$ defines a \emph{coupled supersymmetry} (\emph{coupled SUSY}) if it satisfies
\begin{align}
    a^*a &= bb^* + \gamma, \label{eq:susy_1} \\
    aa^* &= b^*b + \delta. \label{eq:susy_2}
\end{align}

\noindent The terms $a^*a$, $aa^*$, $b^*b$ and $bb^*$ will be referred to as Hamiltonians throughout the paper.
\end{defn}

It is easily seen that the coupled SUSY conditions are equivalent to the system $[a^*,a] = -[b^*,b]+\gamma-\delta$ and $\{a^*,a\} = \{b^*,b\} + \gamma+\delta$, where $[A,B] = AB-BA$ and $\{A,B\} = AB+BA$, which are to be understood on the intersection of their domains of definition. Note that \eqref{eq:susy_1} and \eqref{eq:susy_2} do not imply each other, as evidenced by taking $a = \frac{1}{\sqrt{2}}\left(\frac{d}{dx}+x\right)$ and $b = \frac{1}{\sqrt{2}}\left(\frac{d}{dx}+x\right) e^{ix}$ on $C^{\infty}_c(\R)$. In this case, \eqref{eq:susy_1} holds whereas \eqref{eq:susy_2} does not. Placing the exponential on the right in the definition of $b$ shows that the reverse implication does not hold either.

The condition that $\gamma \neq \delta$ will play a crucial role in what follows. We assumed without much loss of generality that $\gamma < \delta$; otherwise, if $\gamma > \delta$, we may simply switch the roles of $a$ and $a^*$ as well as $b$ and $b^*$ and the conditions for Definition 1 would still hold.

\begin{exmp}
Definition 1 includes the QMHO by letting $a = \frac{1}{\sqrt{2}} \left(\frac{d}{dx}+x\right)$, $b = \frac{1}{\sqrt{2}} \left(-\frac{d}{dx}+x\right)$, $-\gamma = 1 = \delta$. There exists an infinite family of examples solving the coupled SUSY equations. A straightforward calculation shows that, for $n\in\N$, the operators
\begin{align}
    a_n &= \frac{1}{\sqrt{2}} \left(\frac{1}{x^{n-1}} \frac{d}{dx} + x^n\right) \\
    b_n &= \frac{1}{\sqrt{2}} \left(-\frac{d}{dx}\frac{1}{x^{n-1}} + x^n\right)
\end{align}

\noindent taken with their adjoints on $L^2(\R)$ also define a coupled SUSY when restricted to an appropriate subspace where $\gamma = -1$ and $\delta = 2n-1$. This family of examples is closely related to standard SUSY with the anharmonic superpotential $W_1(x) = x^{2n-1}$ where the charge operator $\frac{d}{dx}+x^{2n-1}$ has been multiplied on the left by $\frac{1}{x^{n-1}}$. While a change of variable relates these operators to the traditional QMHO ladder operators, their respective adjoints have a much different form, creating an altogether new system.
\end{exmp}

Coupled SUSY, as the name suggests, automatically comes with a supersymmetric structure by noting that $\mathcal{H}_1 = a^*a$ and $\mathcal{H}_2 = aa^*$ come equipped with the usual SUSY ladder structure. In SUSY, one considers broken and unbroken (or exact) systems. SUSY is unbroken if at least one of $\mathcal{Q}_1$ and $\mathcal{Q}_1^*$ in the factorization $\mathcal{H}_1 = \mathcal{Q}_1^*\mathcal{Q}_1$ annihilates a state and is broken if neither annihilate a state \cite{cooper}. Unbroken SUSY is the primary focus in the study of supersymmetric quantum mechanics and is the focus of much of the remainder of this paper. Partially broken coupled SUSY will be investigated in Section 7. To this end, we make the following definition.

\begin{defn}
Let $\{a,b,\gamma,\delta\}$ form a coupled SUSY. We say that it is an \emph{unbroken} coupled SUSY if $a$ annihilates a state and $b$ annihilates a state but $a^*$ and $b^*$ do not annihilate any states. Otherwise we say that it is a \emph{broken} coupled SUSY. A broken coupled SUSY in which exactly one of $a$ and $b$ (or $a^*$ and $b^*$) annihilates a state is referred to as a \emph{partially broken} coupled SUSY.
\end{defn}

\begin{rem}
If \eqref{eq:susy_1} and \eqref{eq:susy_2} hold but instead $a^*$ and $b^*$ annihilate states, we may reverse the roles of $a$ and $a^*$ (and $b$ and $b^*$) to obtain an unbroken coupled SUSY. Moreover, in the unbroken case, $\gamma < 0 < \delta$. This inequality plays an important role in the proceeding analysis.
\end{rem}

An example of a partially broken SUSY is given by taking $a = \frac{1}{x^2}\frac{d}{dx}x^2 + x$ and $b = x^2\frac{d}{dx}\frac{1}{x^2} + x$ on sufficiently nice functions in $L^2(\R)$ and their (formal) adjoints. The SUSY generated by $a^*a$ and $aa^*$ is unbroken since $a$ annihilates a state in $L^2(\R)$, whereas the SUSY generated by $b^*b$ and $bb^*$ is broken as neither $b$ nor $b^*$ annihilate states in $L^2(\R)$.

In our first theorem, we show that a ladder structure exists for coupled SUSY that does not exist for general SUSYs.

\begin{thm}
 If $\{a,b,\gamma,\delta\}$ form a coupled SUSY, then $a^*b^*$ and $ba$ act as ladder operators for $a^*a$ (and $b^*b$) while $b^*a^*$ and $ab$ act as ladder operators for $aa^*$ (and $bb^*$). Moreover the triples $\{a^*a - \frac{\gamma}{2}, a^*b^*, ba\}$ and $\{aa^*-\frac{\delta}{2},b^*a^*,ab\}$ generate Lie algebras isomorphic to $\mathfrak{su}(1,1)$.
\end{thm}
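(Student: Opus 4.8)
The plan is to reduce the whole statement to a short list of commutator identities, each obtained from \eqref{eq:susy_1}--\eqref{eq:susy_2} by a one-line substitution, and then to read off both the ladder property and the Lie-algebra structure. Throughout I would fix a common dense domain $\mathscr{D}$ that is carried into itself by each of $a,a^{\dagger},b,b^{\dagger}$; the hypotheses $\dom a=\dom b$, $\dom a^{\dagger}=\dom b^{\dagger}$, $\ran a\subseteq\dom a^{\dagger}$ and their $b$-analogues are exactly what make the quadratic products $a^{\dagger}a,\,aa^{\dagger},\,a^{\dagger}b,\,b^{\dagger}a$, etc., meaningful, and on such a $\mathscr{D}$ the identities below hold as genuine operator identities (for the differential-operator family of Example 2 one takes $\mathscr{D}$ to be a suitable space of smooth functions).

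First I would compute the four ``ladder'' commutators. Using $aa^{\dagger}=bb^{\dagger}+\delta$ and then $b^{\dagger}b=a^{\dagger}a-\gamma$ one gets
\[
a^{\dagger}a\cdot a^{\dagger}b=a^{\dagger}(aa^{\dagger})b=a^{\dagger}b\,(b^{\dagger}b)+\delta a^{\dagger}b=a^{\dagger}b\cdot a^{\dagger}a+(\delta-\gamma)a^{\dagger}b,
\]
so $[a^{\dagger}a,a^{\dagger}b]=(\delta-\gamma)a^{\dagger}b$, and the identical manipulation gives $[a^{\dagger}a,b^{\dagger}a]=-(\delta-\gamma)b^{\dagger}a$, $[aa^{\dagger},ba^{\dagger}]=(\delta-\gamma)ba^{\dagger}$ and $[aa^{\dagger},ab^{\dagger}]=-(\delta-\gamma)ab^{\dagger}$. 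Since $\gamma<\delta$ the scalar $\delta-\gamma$ is nonzero, so the standard argument already used above for the QMHO (if $\mathcal H\psi=E\psi$ and $[\mathcal H,R]=\lambda R$ then $\mathcal H R\psi=(E+\lambda)R\psi$) shows each of these is a genuine ladder operator. The statements for $b^{\dagger}b$ and $bb^{\dagger}$ are then immediate, because $b^{\dagger}b=a^{\dagger}a-\gamma$ and $bb^{\dagger}=aa^{\dagger}-\delta$ differ from $a^{\dagger}a$ and $aa^{\dagger}$ only by scalars and hence have the same commutators with everything.

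For the $\mathfrak{su}(1,1)$ claim the one genuinely substantive computation is the ``cross'' commutator. Expanding $a^{\dagger}b\,b^{\dagger}a=a^{\dagger}(aa^{\dagger}-\delta)a=(a^{\dagger}a)^{2}-\delta\,a^{\dagger}a$ and $b^{\dagger}a\,a^{\dagger}b=b^{\dagger}(bb^{\dagger}+\delta)b=(b^{\dagger}b)^{2}+\delta\,b^{\dagger}b$, then substituting $b^{\dagger}b=a^{\dagger}a-\gamma$ into the second and collecting terms, I expect to arrive at $[a^{\dagger}b,b^{\dagger}a]=-2(\delta-\gamma)\bigl(a^{\dagger}a-\tfrac{\gamma}{2}\bigr)$, and symmetrically $[ba^{\dagger},ab^{\dagger}]=-2(\delta-\gamma)\bigl(aa^{\dagger}-\tfrac{\delta}{2}\bigr)$. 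Setting $K_{0}=\tfrac{1}{\delta-\gamma}\bigl(a^{\dagger}a-\tfrac{\gamma}{2}\bigr)$, $K_{+}=\tfrac{1}{\delta-\gamma}a^{\dagger}b$, $K_{-}=\tfrac{1}{\delta-\gamma}b^{\dagger}a$, the identities above become $[K_{0},K_{\pm}]=\pm K_{\pm}$ and $[K_{+},K_{-}]=-2K_{0}$, which are the defining brackets of $\mathfrak{su}(1,1)$; moreover the three brackets stay in $\operatorname{span}\{K_{0},K_{+},K_{-}\}$, so the generated Lie algebra is precisely this three-dimensional one. Since $a^{\dagger}a$ is self-adjoint and $(a^{\dagger}b)^{\dagger}=b^{\dagger}a$ we also have $K_{0}^{\dagger}=K_{0}$ and $K_{+}^{\dagger}=K_{-}$, which fixes the real form as $\mathfrak{su}(1,1)$ rather than $\mathfrak{su}(2)$ (the latter would require $[K_{+},K_{-}]=+2K_{0}$). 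The same rescaling applied to $\{aa^{\dagger}-\tfrac{\delta}{2},ab^{\dagger},ba^{\dagger}\}$ settles the second triple.

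The main obstacle is bookkeeping rather than algebra: the cross-commutator involves fourth-order products of $a,a^{\dagger},b,b^{\dagger}$, so one must verify that the chosen core $\mathscr{D}$ really is invariant under each factor before the formal cancellations are legitimate. Once a suitable invariant $\mathscr{D}$ is in place, every step above is a single substitution using only \eqref{eq:susy_1} and \eqref{eq:susy_2}, and the ``coupled'' hypothesis is used essentially because each commutator needs \emph{both} relations.
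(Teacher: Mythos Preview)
Your proposal is correct and follows essentially the same route as the paper: both compute the ladder commutators $[a^{\dagger}a,a^{\dagger}b]$ etc.\ by one substitution from each of \eqref{eq:susy_1}--\eqref{eq:susy_2}, then compute $[a^{\dagger}b,b^{\dagger}a]$ by rewriting $a^{\dagger}bb^{\dagger}a$ and $b^{\dagger}aa^{\dagger}b$ via $bb^{\dagger}=aa^{\dagger}-\delta$ and $aa^{\dagger}=bb^{\dagger}+\delta$ to obtain $-2(\delta-\gamma)\bigl(a^{\dagger}a-\tfrac{\gamma}{2}\bigr)$, and finally rescale to read off the $\mathfrak{su}(1,1)$ relations. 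Your additional remarks on the invariant domain $\mathscr{D}$ and on the real form (via $K_{+}^{\dagger}=K_{-}$) are welcome elaborations but do not change the underlying argument.
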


\begin{proof}
To prove this, we proceed in much the same way as in the standard QMHO by considering the commutator of $a^*a$ with $a^*b^*$, likewise $a^*b^*$ with $ba$. The other cases with $aa^*$, $b^*a^*$ and $ab$ follow the same logic and so they are omitted for the sake of brevity.
\begin{align}
    [a^*a,a^*b^*] &= a^*aa^*b^* - a^*b^* a^*a \\
    &= a^*aa^*b^* - a^*b^*(bb^*+\gamma) \\
    &= a^*(aa^*-bb^*)b^* - \gamma a^*b^* \\
    &= (\delta-\gamma)a^*b^*.
\end{align}

Similar reasoning shows that $[a^*a,ba] = -(\delta-\gamma)ba$. Since $\gamma<\delta$, $a^*b^*$ is a raising operator for $a^*a$ and $ba$ is a lowering operator for $a^*a$. To show that these generate a Lie algebra, we  inspect the commutator of $a^*b^*$ and $ba$:
\begin{align}
    [a^*b^*,ba] &= a^*b^*ba - baa^*b^* \\
    &= a^*(aa^*-\delta)a - b(b^*b+\delta)b^* \\
    &= (a^*a)^2 - (bb^*)^2 - \delta(a^*a + bb^*) \\ 
    &= -2(\delta-\gamma)\left(a^*a - \frac{\gamma}{2}\right).
\end{align}

Thus the triple generates a Lie algebra as it is closed under commutation. After adding a multiple of the identity to $a^* a$ and rescaling,
\begin{equation}
   \mathcal K_+ = \frac{1}{\delta-\gamma} a^* b^*, \quad \mathcal K_- = \frac{1}{\delta-\gamma} ba, \quad \mathcal K_0 = \frac{1}{\delta-\gamma} \left(a^* a - \frac \gamma 2 \right)
\end{equation}

\noindent are seen to verify
\begin{equation}
    [\mathcal{K}_0, \mathcal{K}_{\pm}] = \pm \mathcal{K}_{\pm}, \quad [\mathcal{K}_+, \mathcal{K}_-] = -2\mathcal{K}_0
\end{equation}

\noindent the commutation relations of $\mathfrak{su}(1,1)$ \cite{perelomov}, hence the Lie algebra associated to coupled SUSY is found to be isomorphic to the $\mathfrak{su}(1,1)$ Lie algebra.
\end{proof}

We note that the discrete series of $\mathfrak{su}(1,1)$ representations correspond to a choice of $\frac{\gamma}{\delta-\gamma} \in \mathbb Z$, $\frac{\gamma}{\delta-\gamma} \le -2$ \cite{perelomov}.

As with any mathematical structure, it is of interest to ask if there are systems for which the commutation relations in Theorem 1 hold that do not correspond to a coupled SUSY---be it a broken or unbroken coupled SUSY. In the next theorem, we prove that this is not the case under the modest assumption that $\ker a^* = \{0\} = \ker b^*$.

\begin{thm}
If $a,a^*,b,b^*$ are operators satisfying $\ker a^* = \{0\} = \ker b^*$ and
\begin{align}
    [a^*a,a^*b^*] &= \lambda a^*b^*, \\
    [b^*b,b^*a^*] &= \lambda' b^*a^*, \\
    [a^*b^*,ba] &= \mu a^*a + \nu, \\
    [b^*a^*,ab] &= \mu'aa^* + \nu',
\end{align}
where $\lambda,\lambda',\nu,\nu'\in\R$, $\lambda \ne 0$, then $bb^* = \alpha a^*a + \gamma$ and $b^*b = \beta aa^* + \delta$ for some $\alpha,\beta,\gamma,\delta\in\R$.
\end{thm}

\begin{proof}
Suppose that $a^*a = bb^* + S$ and $aa^* = b^*b + T$ for some as-of-yet undetermined operators $S$ and $T$. Inspecting commutation relations, we have
\begin{align}
    [a^*a,a^*b^*] &= a^*(aa^*b^* - b^*a^*a) \\
    &= a^*((b^*b+T)b^* - b^*(bb^* + S)) \\
    &= a^*(Tb^*-b^*S).
\end{align}

\noindent Likewise, it follows that
\begin{equation}
    [b^*b, b^*a^*] = b^*(a^*T-Sa^*).
\end{equation}

\noindent Equating the above with the ladder operator relations, it can be seen that
\begin{align}
    [a^*a,a^*b^*] = \lambda a^*b^* &= a^*(Tb^*-b^*S) \\
    [b^*b,b^*a^*] = \lambda' b^*a^* &= b^*(a^*T-Sa^*).
\end{align}

\noindent Since $a^*$ and $b^*$ have trivial kernel, we must have that $Tb^*-b^*S = \lambda b^*$ and similarly $aT-Sa = \lambda'a$. We may use these to prove our result.
\begin{align}
    [b^*a^*, ab] &= b^*(bb^* + S)b - a(a^*a-S)a^* \\
    &= (b^*b)^2 + b^*Sb - (aa^*)^2 + aSa^* \\
    &= (aa^* - T)^2 + b^*Sb - (aa^*)^2 + aSa^*\\
    &= -aa^*T - Taa^* + T^2 + b^*Sb + aSa^* \\
    &= -aa^*T-T(b^*b+T) + T^2 + b^*Sb + aSa^* \\
    &= -a(a^*T-Sa^*) - (Tb^*-b^*S)b \\
    &= -\lambda'aa^* - \lambda b^*b
\end{align}

\noindent Making use of our relations above, it follows that
\begin{equation}
    [b^*a^*,ab] = \mu' aa^*+\nu' = -\lambda'aa^* - \lambda b^*b.
\end{equation}

\noindent Thus
\begin{equation}
    b^*b = -\frac{\lambda'+\mu'}{\lambda}aa^* - \frac{\nu'}{\lambda}
\end{equation}

By repeating these steps with $[a^*b^*,ba]$ and using the above relations $Tb^*-b^*S = \lambda b^*$ and $a^*T-Sa^* = \lambda'a^*$, the claimed affine relationship between $a^*a$ and $bb^*$ holds as well.
\end{proof}

Taking $\lambda=\lambda'$ in the previous theorem gives exactly that $a,b$ form a coupled SUSY as $\alpha = 1 = \beta$ in this case.

It is natural to ask if a coupled SUSY is unique, i.e. given $a$, $\gamma$, $\delta$ if $b$ is unique or if there are many possible choices for $b$. We prove a uniqueness result in the next theorem.

\begin{thm}
Suppose that $\{a,b,\gamma,\delta\}$ and $\{a,c,\gamma,\delta\}$ define coupled SUSYs. Then $b = U (c^* c)^{1/2} $ and $c = V (b^* b)^{1/2}$ for some partial isometries $U,V$.
\end{thm}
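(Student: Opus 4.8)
The plan is to observe that two coupled SUSY systems sharing the data $a,\gamma,\delta$ are forced to have $b^{\dagger}b = c^{\dagger}c$, and then to read off the conclusion directly from the polar decompositions of $b$ and of $c$.

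First I would exploit that both quadruplets contain the \emph{same} operator $a$ and the \emph{same} constant $\gamma$. Subtracting the two instances of \eqref{eq:susy_1}, namely $a^{\dagger}a = b^{\dagger}b + \gamma$ and $a^{\dagger}a = c^{\dagger}c + \gamma$, gives $b^{\dagger}b = c^{\dagger}c$ as a genuine equality of operators, domains included, since Definition 1 stipulates \eqref{eq:susy_1}–\eqref{eq:susy_2} as honest operator identities rather than identities on a common core; note that only \eqref{eq:susy_1} is needed here. Because $b$ is closed and densely defined, $b^{\dagger}b$ is self-adjoint and non-negative (von Neumann), so the functional calculus yields $(b^{\dagger}b)^{1/2}$, and from $b^{\dagger}b = c^{\dagger}c$ we get $(b^{\dagger}b)^{1/2} = (c^{\dagger}c)^{1/2} =: P$.

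Next I would invoke the polar decomposition for closed, densely defined operators, which applies because closedness and dense definedness of $b$ and $c$ are standing hypotheses in Definition 1. It produces a partial isometry $U$ with $\ker U = \ker b$ such that $b = U\,(b^{\dagger}b)^{1/2} = UP = U\,(c^{\dagger}c)^{1/2}$, and likewise a partial isometry $V$ with $\ker V = \ker c$ such that $c = V\,(c^{\dagger}c)^{1/2} = VP = V\,(b^{\dagger}b)^{1/2}$. This is exactly the asserted identity.

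There is essentially no hard step: the only items needing a word of justification are that the coupled SUSY relations are to be read as equalities of self-adjoint operators including domains (which is precisely what Definition 1 says), and the availability of polar decomposition (which follows from closedness and dense definedness). One may append the remark that $U$ and $V$ become unique once one imposes $\ker U = \ker b$ and $\ker V = \ker c$; in an unbroken coupled SUSY system, where $b$ and $c$ annihilate no state (see the Remark above), one has $\ker b = \ker c = \{0\}$, so $P$ has dense range and $U$, $V$ are in fact isometries.
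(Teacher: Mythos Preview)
Your proof is correct and follows essentially the same approach as the paper: deduce $b^{\dagger}b = c^{\dagger}c$ from the coupled SUSY relations and then apply the polar decomposition for closed, densely defined operators. The paper also records $bb^{\dagger} = cc^{\dagger}$ from \eqref{eq:susy_2} but, as you note, only the first identity is actually needed; your added remarks on uniqueness and on the unbroken case (where $U,V$ are isometries) are a nice supplement but not required for the statement as written.
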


\begin{proof}
 Since we have that
 \begin{align}
     bb^* + \gamma &= a^*a = cc^* + \gamma, \\
     b^*b + \delta &= aa^* = c^*c + \delta,
 \end{align}
 It follows that $b^*b = c^*c$ and $bb^* = cc^*$. The equivalences are warranted as the operators are defined on the same subspaces. From the polar decomposition for closed operators \cite{reed1}, the first equality guarantees that $b = U(c^* c)^{1/2}$ for some partial isometry $U$. Switching roles gives $c = V (b^* b)^{1/2}$ for some partial isometry $V$.
\end{proof}

\section{An Energy Ladder Structure for Coupled SUSY}

With ladder operators established for coupled SUSYs, it is natural to inquire about the eigenvalues of the coupled SUSY Hamiltonians $a^*a$, $aa^*$, $b^*b$, and $bb^*$. As in standard SUSY, $a^*a$ and $aa^*$ share the same eigenvalues---up to a possible eigenvalue of $0$. Likewise, $b^*b$ and $bb^*$ share the same eigenvalues---up to a possible eigenvalue of $0$. Moreover, the spectra of $a^*a$ and $bb^*$ are related by a shift of $\gamma$ since $a^*a = bb^* + \gamma$. Thus it is sufficient to study one of $a^*a$ and $aa^*$ to fully understand the eigenvalues of any of the Hamiltonians in a coupled SUSY.

\begin{thm}
If $\{a,b,\gamma,\delta\}$ form an unbroken coupled SUSY, then the eigenvalues of $a^*a$ are given by $m(\delta-\gamma)$ and $m(\delta-\gamma)+\delta$ where $m\in\N_0$.
\end{thm}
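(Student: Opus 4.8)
The plan is to run the ladder argument from Theorem 1. Set $\Delta := \delta-\gamma>0$; the relations $[a^{\dagger}a,a^{\dagger}b]=\Delta\,a^{\dagger}b$ and $[a^{\dagger}a,b^{\dagger}a]=-\Delta\,b^{\dagger}a$ from Theorem 1 show that $a^{\dagger}b$ raises and $b^{\dagger}a$ lowers the eigenvalues of $a^{\dagger}a$ by $\Delta$, up to annihilation. (One could instead decompose the Hilbert space into $\mathfrak{su}(1,1)$-irreducibles and invoke the classification of lowest-weight representations, but the direct argument below is cleaner.) There are two things to establish: (A) every eigenvalue of $a^{\dagger}a$ lies in $\{m\Delta:m\in\N_0\}\cup\{m\Delta+\delta:m\in\N_0\}$, and (B) every value in this set occurs.

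For (A), take $a^{\dagger}a\psi=E\psi$ with $\psi\neq0$; positivity of $a^{\dagger}a$ gives $E\geq0$. As long as they are nonzero, the iterates $(b^{\dagger}a)^{k}\psi$ are eigenvectors with eigenvalues $E-k\Delta$, which cannot stay nonnegative forever, so the chain terminates: let $\psi'=(b^{\dagger}a)^{j}\psi\neq0$ be the last nonzero one, so that $a^{\dagger}a\psi'=(E-j\Delta)\psi'$ and $b^{\dagger}a\psi'=0$, i.e.\ $a\psi'\in\ker b^{\dagger}$. If $a\psi'=0$ then $E-j\Delta=\|a\psi'\|^{2}/\|\psi'\|^{2}=0$. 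If $a\psi'\neq0$, then $b^{\dagger}(a\psi')=0$ forces $bb^{\dagger}(a\psi')=0$, so \eqref{eq:susy_2} gives $aa^{\dagger}(a\psi')=\delta(a\psi')$; but also $aa^{\dagger}(a\psi')=a\bigl(a^{\dagger}a\,\psi'\bigr)=(E-j\Delta)(a\psi')$, whence $E-j\Delta=\delta$. Thus $E\in\{j\Delta,\,j\Delta+\delta\}$.

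For (B), unbrokenness supplies $\psi_{0}\neq0$ with $a\psi_{0}=0$, so $a^{\dagger}a\psi_{0}=0$ and $0$ is an eigenvalue, and $\phi_{0}\neq0$ with $b^{\dagger}\phi_{0}=0$, so $bb^{\dagger}\phi_{0}=0$ and \eqref{eq:susy_2} gives $aa^{\dagger}\phi_{0}=\delta\phi_{0}$; hence $a^{\dagger}a\,(a^{\dagger}\phi_{0})=\delta\,(a^{\dagger}\phi_{0})$, exhibiting $\delta$ as an eigenvalue when $\delta>0$ (and it is already covered if $\delta=0$). To propagate these upward I will use the norm identity $\|a^{\dagger}b\,\psi\|^{2}=(E-\gamma)(E-\gamma+\delta)\,\|\psi\|^{2}$ for an eigenvector $\psi$ of $a^{\dagger}a$ with eigenvalue $E$, which follows from $b^{\dagger}b\psi=(E-\gamma)\psi$ and $aa^{\dagger}(b\psi)=(bb^{\dagger}+\delta)(b\psi)=(E-\gamma+\delta)(b\psi)$. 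Since $E\geq0\geq\gamma$ and $\gamma<\delta$, both factors are nonnegative and the second is strictly positive, so $a^{\dagger}b\psi=0$ only if $E=\gamma$, i.e.\ only if $\gamma=0$ and $E=0$. Hence if $\gamma<0$, applying $a^{\dagger}b$ repeatedly to $\psi_{0}$ and to $a^{\dagger}\phi_{0}$ realizes every $m\Delta$ and every $m\Delta+\delta$; if $\gamma=0$ (so $\Delta=\delta$), the same applied to $a^{\dagger}\phi_{0}$ yields $\delta,2\delta,\dots$, which together with the eigenvalue $0$ is exactly $\{m\delta\}\cup\{m\delta+\delta\}$.

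I expect the main obstacle to be part (A): ruling out eigenvalues between the rungs, where the decisive point is the dichotomy at the foot of the lowering ladder — if the chain stops at $\psi'$ with $a\psi'\neq0$, then $a\psi'$ is a zero mode of $bb^{\dagger}$, and \eqref{eq:susy_2} pins the bottom eigenvalue to $\delta$ rather than $0$, so the two interleaved ladders are the only possibilities. A lesser nuisance is the domain bookkeeping required to apply the commutator identities of Theorem 1 to the iterates (it suffices to work on the span of eigenvectors, kept invariant by the hypotheses on domains and ranges in Definition 1), together with the degenerate cases $\gamma=0$ or $\delta=0$ handled above.
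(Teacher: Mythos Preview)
Your proof is correct and follows essentially the same ladder strategy as the paper: raise from the two ground states to produce the towers $m(\delta-\gamma)$ and $m(\delta-\gamma)+\delta$, and lower from an arbitrary eigenvalue to force it onto one of the two ladders. Your treatment is in fact more explicit than the paper's in two places: you spell out the dichotomy at the bottom of the lowering chain (either $a\psi'=0$ forcing $E-j\Delta=0$, or $a\psi'\in\ker b^{\dagger}\setminus\{0\}$ forcing $E-j\Delta=\delta$ via \eqref{eq:susy_2}), where the paper only gestures at ``similar reasoning,'' and you verify via the norm identity $\|a^{\dagger}b\psi\|^{2}=(E-\gamma)(E-\gamma+\delta)\|\psi\|^{2}$ that the raising operator never annihilates except possibly at $E=\gamma=0$, which the paper also leaves implicit.
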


\begin{proof}
We first note that if $\psi\in\dom{a}$ is an eigenfunction of $a^*a$, then $a^*b^*\psi\in\dom{a}$; particularly, it is normalizable. To see this, note that $\langle a^*b^*\psi,a^*b^*\psi\rangle = \langle \psi,baa^*b^*\psi \rangle = \langle \psi,b(b^*b+\delta)b^*\psi\rangle < \infty$ since $\psi$ is also an eigenstate of $b^*b$. An analogous result holds for $aa^*$.

Since $a^*b^*$ is a raising operator for $a^*a$ and $0$ is an eigenvalue of $a^*a$, $m(\delta-\gamma)$ is an eigenvalue for $a^*a$. Moreover, $\delta$ is an eigenvalue for $aa^*$ since $0$ is an eigenvalue of $b^*b$. $b^*a^*$ is a raising operator for $aa^*$, so $\delta+m(\delta-\gamma)$ is an eigenvalue for $aa^*$. Since $a^*a$ and $aa^*$ share eigenvalues---up to $0$---$m(\delta-\gamma)+\delta$ is an eigenvalue of $a^*a$.

These are indeed \emph{all} of the eigenvalues for $a^*a$. If $\lambda$ is an eigenvalue of $a^*a$, then so is $\lambda-(\delta-\gamma)$ since $ba$ is a lowering operator for $a^*a$. If an eigenvalue $\lambda$ (corresponding to the eigenfunction $\psi$) existed between $0$ and $\delta-\gamma$, then $\lambda-(\delta-\gamma) < 0$ would be an eigenvalue of $a^*a$ (corresponding to the eigenfunction $ba\psi$) which contradicts the positivity of $a^*a$. Similar reasoning shows that no eigenvalue can exist between $m(\delta-\gamma)$ and $m(\delta-\gamma)+\delta$, which proves the theorem.
\end{proof}

A consequence of this is that there exist no bounded operator representations for a coupled SUSY since the eigenvalues are unbounded. Particularly, no matricial representations exist.

With the ladder structure for $a^*a$ (and $aa^*$) via $a^*b^*$ and $ba$ (and $b^*a^*$ and $ab$), a richer ladder structure exists than the standard SUSY or QMHO structure. We already know that $a$ and $a^*$ transfer between the sectors generated by $a^*a$ and $aa^*$ so we wish to explore the structure that lies beyond this.

In general, there need not be only one state that is annihilated by $a$ (or more generally two states annihilated by $ba$). For instance, it could be that
\begin{equation}
    a = \frac{1}{\sqrt{2}}\left(\begin{array}{cc} \frac{d}{dx}+x & 0 \\ 0 & \frac{d}{dx}+x \end{array}\right),
\end{equation}
which annihilates the states $(\exp(-x^2/2),0)^{\operatorname{T}}$ and $(0,\exp(-x^2/2))^{\operatorname{T}}$. As such we define the following notation.

\begin{defn}
Let $\{a,b,\gamma,\delta\}$ be an unbroken coupled SUSY. Let $\psi_{i,0}$, $i\in I$ for some finite or at most countable index set $I$, be an orthonormal family of (normalized) state vectors annihilated by $a$ and $\phi_{j,0}$, $j\in J$ for some finite or at most countable index set $J$, be an orthonormal family of state vectors annihilated by $ba$ but not annihilated by $a$. Define then $\psi_{i,m} = (a^*b^*)^m\psi_{i,0}/\|(a^*b^*)^m\psi_{i,0}\|$ and $\phi_{j,m} = (a^*b^*)^m \phi_{j,0}/\|(a^*b^*)^m \phi_{j,0}\|$.
\end{defn}

In the case of the harmonic oscillator, the coupled SUSY collapses because $b=a$ and the kernel of $a$ is spanned by
the Gaussian $h_0$, while a vector annihilated by $ba = a^2$ but not by $a$ is necessarily a multiple of the first excited state $h_1$.
Note that in general the states annihilated by $ba$ but not by $a$ are in one-to-one correspondence with the states annihilated by $b$ via the usual SUSY ladder structure.

%

\begin{thm}
Let $\{a,b,\gamma,\delta\}$ define an unbroken coupled SUSY. For any $i\in I$, $j\in J$, and $m\in\N_0$, $\psi_{i,m}$ and $\phi_{j,m}$ are eigenfunctions of $a^*a$ and any normalized eigenfunction of $a^*a$ is of the form $\sum_i \lambda_i \psi_{i,m}$ or $\sum_j \eta_j \phi_{j,m}$ for some finite collection $\lambda_i,\eta_j\in\C$.
\end{thm}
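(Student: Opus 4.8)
The plan is to establish the two halves separately, using Theorem 1 for the ladder relations and Theorem 5 for the list of admissible eigenvalues, and at one point reading Definition 4 as saying that the $\phi_{j,0}$ form an orthonormal basis of $\ker(b^\dagger a)\ominus\ker a$.

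\emph{Forward direction.} Each $\psi_{i,0}\in\ker a$ is a $0$-eigenfunction of $a^\dagger a$, and since $[a^\dagger a,a^\dagger b]=(\delta-\gamma)a^\dagger b$ the vector $(a^\dagger b)^m\psi_{i,0}$ is an eigenfunction with eigenvalue $m(\delta-\gamma)$ provided it is nonzero. I would check non-vanishing with the identity used in the proof of Theorem 5: on the $\mu$-eigenspace, $\|(a^\dagger b)\chi\|^2=\langle\chi,(b^\dagger b)^2\chi\rangle+\delta\langle\chi,b^\dagger b\,\chi\rangle=(\mu-\gamma)(\mu-\gamma+\delta)\|\chi\|^2$, which is strictly positive in the nondegenerate regime discussed below; as a byproduct this shows $a^\dagger b$ restricts on each eigenspace to a nonzero scalar times an isometry, a fact I reuse. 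For $\phi_{j,0}$ I would first show it has eigenvalue $\delta$: being killed by $b^\dagger a$ but not by $a$, the nonzero vector $a\phi_{j,0}$ lies in $\ker b^\dagger$, hence is a $0$-eigenfunction of $bb^\dagger$ and a $\delta$-eigenfunction of $aa^\dagger=bb^\dagger+\delta$; applying $a^\dagger$ gives $a^\dagger a\phi_{j,0}-\delta\phi_{j,0}\in\ker a$, and pairing against $\ker a$ (using that the $\phi_{j,0}$ are chosen orthogonal to $\ker a$) shows this vector is also orthogonal to $\ker a$, hence zero. The ladder relation then makes $\phi_{j,m}$ an eigenfunction with eigenvalue $\delta+m(\delta-\gamma)$.

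\emph{Converse direction.} Let $\psi$ be a normalized eigenfunction with eigenvalue $\lambda$, so by Theorem 5 either $\lambda=n(\delta-\gamma)$ or $\lambda=n(\delta-\gamma)+\delta$ for a unique $n\in\N_0$. The engine is the pair of identities, valid on the $\mu$-eigenspace, $\|(b^\dagger a)\chi\|^2=\langle\chi,(a^\dagger a)^2\chi\rangle-\delta\langle\chi,a^\dagger a\,\chi\rangle=\mu(\mu-\delta)\|\chi\|^2$, and $(a^\dagger b)(b^\dagger a)=(a^\dagger a)^2-\delta a^\dagger a$ acting there as the scalar $\mu(\mu-\delta)$. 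Thus $b^\dagger a$ lowers eigenvalues by $\delta-\gamma$ and is injective on every eigenspace except the $0$-eigenspace $\ker a$ and the $\delta$-eigenspace, and this $\delta$-eigenspace is exactly $\ker(b^\dagger a)\ominus\ker a=\overline{\operatorname{span}}\{\phi_{j,0}\}$. I would then apply $b^\dagger a$ to $\psi$ precisely $n$ times; peeling off the factors $(a^\dagger b)(b^\dagger a)$ one eigenspace at a time shows $(a^\dagger b)^n(b^\dagger a)^n\psi$ is a nonzero multiple of $\psi$, so $(b^\dagger a)^n\psi\neq 0$, lying in $\ker a=\overline{\operatorname{span}}\{\psi_{i,0}\}$ when $\lambda=n(\delta-\gamma)$ and in $\overline{\operatorname{span}}\{\phi_{j,0}\}$ when $\lambda=n(\delta-\gamma)+\delta$. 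Expanding $(b^\dagger a)^n\psi$ in that orthonormal basis and applying $(a^\dagger b)^n$ — legitimate since $a^\dagger b$ is bounded on each eigenspace — exhibits $\psi$, after renormalization, as a combination of the $\psi_{i,n}$ respectively the $\phi_{j,n}$.

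\emph{Main obstacle.} The delicate point is that the $n$-fold descent must not collapse early, i.e. the intermediate scalars $\mu(\mu-\delta)$ must all be nonzero. For eigenvalues of the second type the intermediate eigenvalues all exceed $\delta$, so this is automatic; for eigenvalues $n(\delta-\gamma)$ of the first type one needs $k(\delta-\gamma)\neq\delta$ for $k=1,\dots,n$, i.e. $\delta/(\delta-\gamma)\notin\N$. This is the genuine edge case: if $\delta/(\delta-\gamma)=\ell\in\N$ the two ladders interleave, a descent begun in the first ladder is annihilated on reaching the $\delta$-eigenspace, and the eigenfunction is then represented through the $\phi$-family at level $n-\ell$ instead, so the ``or'' in the statement is to be read as permitting either family. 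For every concrete system in this paper (the QMHO, and the operators $a_n$, for which $\delta-\gamma$ is even and $\delta$ odd) one has $\delta/(\delta-\gamma)\notin\N$, the two eigenvalue sequences are disjoint, the dichotomy is clean, and moreover $\gamma<0$, which is what makes the scalars of the first paragraph strictly positive. Finally, the word ``finite'' is accurate exactly when the eigenspaces are finite-dimensional, again the case in all the examples here; in general the argument yields a square-summable family supported on the relevant eigenspace.
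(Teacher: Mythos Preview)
Your approach is essentially the paper's: both descend via $b^{\dagger}a$, invoke the identity $a^{\dagger}b\,b^{\dagger}a=(a^{\dagger}a)^2-\delta\,a^{\dagger}a$ to show the descent is reversible, and then expand in the bottom eigenspace; the paper packages this as induction on $m$ while you perform all $n$ steps at once. Your forward direction (in particular the verification that $\phi_{j,0}$ lies in the $\delta$-eigenspace) is more explicit than the paper's, which simply cites ``earlier arguments''.

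Your ``main obstacle'', however, is not an obstacle at all. In an unbroken system the inequalities $\gamma<0<\delta$ are \emph{strict}: since $\ker b=\{0\}=\ker a^{\dagger}$, one has $-\gamma=\|b\psi_{i,0}\|^2>0$ and $\delta=\|a^{\dagger}\wt\phi_{j,0}\|^2>0$. Consequently $\delta/(\delta-\gamma)\in(0,1)$, so it can never be a positive integer, the two eigenvalue ladders $\{m(\delta-\gamma)\}$ and $\{m(\delta-\gamma)+\delta\}$ are automatically disjoint, and every intermediate descent scalar $k(\delta-\gamma)\bigl(k(\delta-\gamma)-\delta\bigr)$ is nonzero. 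The paper disposes of this in one line, noting $(\delta-\gamma)(m+1)-\delta=m\delta-(m+1)\gamma>0$. Your paragraph on interleaving ladders, the reinterpretation of the word ``or'', and the appeal to the specific examples can therefore be deleted without loss.
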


\begin{proof}
It is clear by earlier arguments that $\psi_{i,m},\phi_{j,m}$ are eigenfunctions of $a^*a$. Conversely, every eigenfunction is a linear combination of the $\psi_{i,m}$ or $\phi_{j,m}$ for fixed $m$. To prove this, we proceed by induction on the eigenvalue of a given eigenfunction $\zeta$.

Without loss of generality, we consider the case that $a^*a\zeta = m(\delta-\gamma)\zeta$. The case of $a^*a\zeta = (m(\delta-\gamma)+\delta)\zeta$ proceeds similarly. For $m=0$ this is trivial since $\ker a = \operatorname{span}\{\psi_{i,0}:i\in I\}$ and $\ker a^* = \{0\}$. Assume that $a^*a\zeta = m(\delta-\gamma)\zeta$ implies that $\zeta = \sum_i \lambda_i \psi_{i,m}$ for some finite collection $\lambda_i\in\C$. Suppose then that $a^*a\wt{\zeta} = (m+1)(\delta-\gamma)\wt\zeta$. We wish to show that $\wt\zeta = \sum_i\wt{\lambda_i}\psi_{i,m+1}$ for some $\wt{\lambda_i}\in\C$.

Applying $ba$ to $\wt\zeta$ yields a state with eigenvalue $m(\delta-\gamma)$ and so $ba\zeta = \sum_i\lambda_i\psi_{i,m}$ for some $\lambda_i\in\C$ by the inductive hypothesis. Applying then $a^*b^*$ we have that
\begin{align}
    a^*b^*\sum \lambda_i \psi_{i,m} &= a^*b^*ba\wt\zeta \\
    &= a^*(aa^*-\delta)a\wt\zeta \\
    &= ((a^*a)^2-\delta a^*a)\wt\zeta \\
    &= ((m+1)^2(\delta-\gamma)^2-(m+1)\delta(\delta-\gamma))\wt\zeta.
\end{align}

\noindent Since $\gamma \le 0$ and $m\ge 0$, $(\delta-\gamma)(m+1)-\delta$ is never zero. Thus $\wt\zeta = \sum_i \wt{\lambda_i}\psi_{i,m+1}$ for some $\wt{\lambda_i}$ as claimed.
\end{proof}

As noted above, there is a correspondence between eigenfunctions of $a^*a$ and $aa^*$ via $a$ and $a^*$ in typical SUSY fashion. Thus we can write the eigenfunctions of $aa^*$ as $\wt{\psi}_{i,m} = a\psi_{i,m}/\|a\psi_{i,m}\|$ (where $m\neq 0$ since $a$ annihilates $\psi_{i,0}$) and $\wt{\phi}_{j,m} = a\phi_{j,m}/\|a\phi_{j,m}\|$. As noted above, the states $\wt{\phi}_{j,0}$ are annihilated by $b$. The following figures summarize the coupled SUSY ladder structure.

\begin{figure}[h!]
\centering
\begin{tikzpicture}
\node (a) at (1.5,0) {};
\node (b) at (3.5,0) {};
\draw (a) to node [above] {$a^*a$} (b);

\node (c) at (7.5,0) {};
\node (d) at (9.5,0) {};
\draw (c) to node [above] {$aa^*$} (d);

\draw (1,-5)--(4,-5);
\draw (1,-4)--(4,-4);
\draw (1,-3.5)--(4,-3.5);
\draw (1,-2.5)--(4,-2.5);
\draw (1,-2)--(4,-2);
\draw (1,-1)--(4,-1);
\draw (1,-0.5)--(4,-0.5);

\draw (7,-4)--(10,-4);
\draw (7,-3.5)--(10,-3.5);
\draw (7,-2.5)--(10,-2.5);
\draw (7,-2)--(10,-2);
\draw (7,-1)--(10,-1);
\draw (7,-0.5)--(10,-0.5);

\node (1a) at (4,-4) {};
\node (1b) at (7,-4) {};

\node (2a) at (4,-2.5) {};
\node (2b) at (7,-1) {};

\node at (0.5,-5) {$\psi_{i,0}$};
\node at (0.5,-4) {$\phi_{j,0}$};
\node at (0.5,-3.5) {$\psi_{i,1}$};
\node at (0.5,-2.5) {$\phi_{j,1}$};
\node at (0.5,-2) {$\psi_{i,2}$};
\node at (0.5,-1) {$\phi_{j,2}$};
\node at (0.5,-0.5) {$\psi_{i,3}$};

\node at (10.5,-4) {$\wt\phi_{j,0}$};
\node at (10.5,-3.5) {$\wt\psi_{i,1}$};
\node at (10.5,-2.5) {$\wt\phi_{j,1}$};
\node at (10.5,-2) {$\wt\psi_{i,2}$};
\node at (10.5,-1) {$\wt\phi_{j,2}$};
\node at (10.5,-0.5) {$\wt\psi_{i,3}$};

\draw[->] (1a) to [bend right=30] node[below] {$a$} (1b);
\draw[->] (1b) to [bend right=30] node[below] {$a^*$} (1a);

\draw[->] (2a) to [bend right=30] node[below] {$b^*$} (2b);
\draw[->] (2b) to [bend right=30] node[below] {$b$} (2a);
\end{tikzpicture}
\caption{The actions of $a$, $b$, $a^*$, and $b^*$ in a coupled SUSY.}
\vspace{.5em}
\begin{tikzpicture}
\node (a) at (1.5,0) {};
\node (b) at (3.5,0) {};
\draw (a) to node [above] {$a^*a$} (b);

\node (c) at (7.5,0) {};
\node (d) at (9.5,0) {};
\draw (c) to node [above] {$aa^*$} (d);

\draw (1,-2)--(4,-2);
\draw (1,-1)--(4,-1);
\draw (1,-0.5)--(4,-0.5);

\draw (7,-1)--(10,-1);
\draw (7,-0.5)--(10,-0.5);

\node (1a) at (4,-0.5) {};
\node (1b) at (7,-0.5) {};

\node (2a) at (4,-2) {};
\node (2b) at (7,-0.5) {};

\node (3a) at (0.1,-2) {};
\node (3b) at (0.1,-0.5) {};

\node at (0.5,-2) {$\psi_{i,0}$};
\node at (0.5,-1) {$\phi_{j,0}$};
\node at (0.5,-0.5) {$\psi_{i,1}$};

\node at (10.5,-1) {$\wt\phi_{j,0}$};
\node at (10.5,-0.5) {$\wt\psi_{i,1}$};

\draw[->] (1b) to [bend right=30] node[below] {$a^*$} (1a);

\draw[->] (2a) to [bend right=30] node[below] {$b^*$} (2b);

\draw[->] (3a) to [bend left=30] node[left] {$a^*b^*$} (3b);
\end{tikzpicture}
\caption{The raising operator structure for the first sector in a coupled SUSY.}
\vspace{.5em}
\begin{tikzpicture}
\node (a) at (1.5,0) {};
\node (b) at (3.5,0) {};
\draw (a) to node [above] {$a^*a$} (b);

\node (c) at (7.5,0) {};
\node (d) at (9.5,0) {};
\draw (c) to node [above] {$aa^*$} (d);

\draw (1,-2)--(4,-2);
\draw (1,-1)--(4,-1);
\draw (1,-0.5)--(4,-0.5);

\draw (7,-1)--(10,-1);
\draw (7,-0.5)--(10,-0.5);

\node (1a) at (4,-0.5) {};
\node (1b) at (7,-0.5) {};

\node (2a) at (4,-2) {};
\node (2b) at (7,-0.5) {};

\node (3a) at (0.1,-2) {};
\node (3b) at (0.1,-0.5) {};

\node at (0.5,-2) {$\psi_{i,0}$};
\node at (0.5,-1) {$\phi_{j,0}$};
\node at (0.5,-0.5) {$\psi_{i,1}$};

\node at (10.5,-1) {$\wt\phi_{j,0}$};
\node at (10.5,-0.5) {$\wt\psi_{i,1}$};

\draw[<-] (1b) to [bend right=30] node[below] {$a$} (1a);

\draw[<-] (2a) to [bend right=30] node[below] {$b$} (2b);

\draw[<-] (3a) to [bend left=30] node[left] {$ba$} (3b);
\end{tikzpicture}
\caption{The lowering operator structure for the first sector in a coupled SUSY.}
\end{figure}

\begin{exmp}
 Returning to family in Example 1, a simple inductive proof shows that, for a fixed $n\in\N$, the functions
 \begin{align}
  \psi_{2m}^{(n)}(x) &= e^{\frac{x^{2n}}{2n}} \left(\frac{d}{dx}\frac{1}{x^{2n-2}}\frac{d}{dx}\right)^m e^{-\frac{x^{2n}}{n}} \\
  \psi_{2m+1}^{(n)}(x) &= e^{\frac{x^{2n}}{2n}} \left(\frac{d}{dx}\frac{1}{x^{2n-2}}\frac{d}{dx}\right)^m\left(2x^{2n-1} e^{-\frac{x^{2n}}{n}}\right).
 \end{align}

 \noindent are an orthogonal collection of eigenfunctions of $a_n^*a_n$ satisfying $\psi_{m+2}^{(n)} = \lambda_{m,n} a_n^*b_n \psi_m^{(n)}$ for some $\lambda_{m,n}$. It is not hard to see that these are polynomials multiplying $\exp(-x^{2n}/(2n))$ and analysis similar to that in \cite{williams} shows that they form a basis for $L^2(\R)$. In the case of $n=1$, these become the usual Hermite functions up to normalization. Similar relations hold for the eigenfunctions of $a_n a_n^*$.
\end{exmp}

\section{The Relationship Between Coupled SUSY and Other Oscillator Systems}

Traditionally the QMHO is associated to the 1D Heisenberg-Weyl Lie algebra as this is  the Lie algebra which corresponds to the canonical commutation relations which is reflected in the algebra generated by the ladder operators. This is not the only Lie algebra which may be associated to the QMHO. There are two other treatments of the QMHO: Schwinger's ``spinification" of the two-particle QMHO and the $\mathfrak{su}(1,1)$ treatment of the QMHO. Coupled SUSY is to some degree a unification of the two treatments.

In Schwinger's ``spinification" of the QMHO \cite{schwinger}, one considers two independent oscillators and defines the operators
\begin{equation}
    \mathcal{Q}_{\nu} = \sum_{i,j=1}^2\frac{1}{2} a_i^* (\sigma_{\nu})^{ij} a_j,
\end{equation}
where for $\nu = 1,2,3$, $\sigma_{\nu}$ is the $\nu$th Pauli matrix. Explicitly, $\mathcal{Q}_1 = \frac{1}{2} (a_1^*a_2+a_1a_2^*)$, $\mathcal{Q}_2 = -\frac{i}{2}(a_1^*a_2-a_1 a_2^*)$, and $\mathcal{Q}_3 = \frac{1}{2}(a_1^*a_1 - a_2^*a_2)$. The operators $\mathcal{Q}_{\nu}$ form the $\mathfrak{su}(2)$ Lie algebra, for instance
\begin{align}
    [\mathcal{Q}_1, \mathcal{Q}_2] &= -\frac{i}{4} [a_1^* a_2 + a_1 a_2^*, a_1^* a_2 - a_1 a_2^*] \\
    &= \frac{i}{2} [a_1^* a_2, a_1 a_2^*] \\
    &= \frac{i}{2} (a_1^*a_1 a_2 a_2^* - a_1 a_1^* a_2^* a_2) \\
    &= \frac{i}{2}(a_1^* a_1 - a_2^* a_2) \\
    &= i \mathcal{Q}_3.
\end{align}

The finite dimensional representations in this system have a fixed quantum number which corresponds to the energy difference between the two oscillators and the individual spin states for a fixed energy correspond to the different configurations within each energy level. The Lie algebra $\mathfrak{su}(2)$ is equipped with its own ladder operators. In this case, defining $\mathcal{Q}_{\pm} = \mathcal{Q}_1 \pm i\mathcal{Q}_2$, we have
\begin{equation}
    [\mathcal{Q}_3, \mathcal{Q}_{\pm}] = \pm \mathcal{Q}_{\pm}.
\end{equation}

\noindent A simple computation shows that $\mathcal{Q}_+ = a_1^*a_2$ and $\mathcal{Q}_- = a_2^*a_1$. These are analogous to the 
(quadratic) ladder operators for a coupled SUSY.

Similar to Schwinger's $\mathfrak{su}(2)$ ``spinification" of the QMHO is an $\mathfrak{su}(1,1)$ representation of the QMHO \cite{perelomov}. If $a,a^*$ represent the usual QMHO ladder operators, then letting $\mathcal{K}_0 = a^*a+\frac{1}{2}$, $\mathcal{K}_+ = (a^*)^2$, and $\mathcal{K}_- = a^2$, we have that
\begin{equation}
    [\mathcal{K}_0, \mathcal{K}_{\pm}] = \pm \mathcal{K}_{\pm}, \quad [\mathcal{K}_+, \mathcal{K}_-] = -2\mathcal{K}_0,
\end{equation}

\noindent which is exactly the $\mathfrak{su}(1,1)$ Lie algebra.

Coupled SUSY has elements of both of the above approaches to the QMHO. The ladder operators in all three cases are second order; in coupled SUSY, the ladder operators are a combination of the operators coming from two separate Hamiltonians; and coupled SUSY retains the $\mathfrak{su}(1,1)$ structure implicitly built into the QMHO.

\section{Coherent States for Coupled SUSY Systems}

In traditional SUSY, coherent states have been developed, however the coherent states do not mimic those of the QMHO even though they do exploit the SUSY structure \cite{bagarello,kouri2,david}. Particularly, QMHO coherent states are eigenfunctions of the lowering operator; SUSY coherent states do not enjoy this property or a property similar to it. Part of the reason for this is the lack of a true ladder structure in SUSY. As seen above, coupled SUSY has a rich ladder structure that has elements of both SUSY and the QMHO. This suggests that the coherent states for a coupled SUSY should behave somewhat analogously to that of the QMHO while retaining a SUSY flavor.

In general, there are several kinds of coherent states, depending on the context in which one is interested. Coherent states can be seen to be uncertainty minimizers, eigenstates of a lowering operator, specific infinite series of basis functions, or generalized displacements of a cyclic vector. In the case of the QMHO, these are all the same, however in general this is not the case. For $\mathfrak{su}(1,1)$, the various forms of coherent states have been studied extensively \cite{brif,gerry,perelomov}. We elect to use the displacement operator definition as it uses the Lie algebraic properties of $\mathfrak{su}(1,1)$ and ties well into the coupled SUSY formalism as there are natural cyclic vectors in $\psi_{i,0}$, $\phi_{j,0}$, $\wt\psi_{i,1}$, and $\wt\phi_{j,0}$.

If one has $\mathfrak{su}(1,1)$ operators $\mathcal{K}_0, \mathcal{K}_{\pm}$ where
\begin{equation}
    [\mathcal{K}_0,\mathcal{K}_{\pm}] = \pm \mathcal{K}_{\pm}, \quad [\mathcal{K}_+,\mathcal{K}_-] = -2\mathcal{K}_0, \label{eq:su(1,1)}
\end{equation}

\noindent then the operator $\mathcal{D}(z) = \exp(z\mathcal{K}_+-\bar{z}\mathcal{K}_-)$ defines an $\mathfrak{su}(1,1)$ displacement operator \cite[p. 74]{perelomov}, where $|z|<1$. Suppose that $\mathcal{K}_0\psi_m = (m+k)\psi_m$ for a basis of states $\psi_m$, where $\psi_{m+1} = \mathcal{K}_+\psi_m/\|\mathcal{K}_+\psi_m\|$. The coherent state generated by $\mathcal{D}(z)$ can be written as
\begin{equation}
    |z;k\rangle = \mathcal{D}(z)\psi_0 = (1-|z|^2)^k \sum_{m=0}^{\infty} \left(\frac{\Gamma(m+2k)}{m!\Gamma(2k)}\right)^{\frac{1}{2}} z^m \psi_m.
\end{equation}

Suppose that $\{a,b,\gamma,\delta\}$ defines a coupled SUSY. Define then the following operators
\begin{align}
    \mathcal{K}_0 = \frac{1}{\delta-\gamma}\left(a^*a-\frac{\gamma}{2}\right), \quad \mathcal{K}_+ = \frac{1}{\delta-\gamma} a^*b^*, \quad \mathcal{K}_- = \frac{1}{\delta-\gamma} ba, \\
    \wt{\mathcal{K}_0} = \frac{1}{\delta-\gamma}\left(aa^*-\frac{\delta}{2}\right), \quad \wt{\mathcal{K}_+} = \frac{1}{\delta-\gamma} b^*a^*, \quad \wt{\mathcal{K}_-} = \frac{1}{\delta-\gamma} ab.
\end{align}

\noindent A straightforward calculation shows that the relations in \eqref{eq:su(1,1)} hold. Since we have four families of cyclic vectors (indexed by $i$ and $j$), we have four sets of coherent states (indexed by $i$ and $j$). A simple computation shows that
\begin{align}
    \mathcal{K}_0 \psi_{i,0} &= -\frac{\gamma}{2(\delta-\gamma)} \psi_{i,0} \\
    \mathcal{K}_0 \phi_{j,0} &= \left(\frac{\delta}{2(\delta-\gamma)}+\frac{1}{2}\right) \phi_{j,0} \\
    \wt{\mathcal{K}_0} \wt\psi_{i,1} &= \left(-\frac{\gamma}{2(\delta-\gamma)}+\frac{1}{2}\right) \wt\psi_{i,1} \\
    \wt{\mathcal{K}_0} \wt\phi_{j,0} &= \frac{\delta}{2(\delta-\gamma)}\wt\phi_{j,0},
\end{align}

\noindent from which we get the following coherent states:
\begin{align}
    \left|z; -\frac{\gamma}{2(\delta-\gamma)}\right\rangle_i &= (1-|z|^2)^{-\frac{\gamma}{2(\delta-\gamma)}} \sum_{m=0}^{\infty} \left(\frac{\Gamma\left(m-\frac{\gamma}{\delta-\gamma}\right)}{m!\,\Gamma\left(-\frac{\gamma}{\delta-\gamma}\right)}\right)^{\frac{1}{2}} z^m \psi_{i,m} \label{eq:coherent_state_1} \\
    \left|z;\frac{\delta}{2(\delta-\gamma)}+\frac{1}{2}\right\rangle_j &= (1-|z|^2)^{\frac{\delta}{2(\delta-\gamma)}+\frac{1}{2}} \sum_{m=0}^{\infty} \left(\frac{\Gamma\left(m+\frac{\delta}{\delta-\gamma}+1\right)}{m!\, \Gamma\left(\frac{\delta}{\delta-\gamma}+1\right)}\right)^{\frac{1}{2}} z^m \phi_{j,m} \label{eq:coherent_state_2}\\
    \left|z;-\frac{\gamma}{2(\delta-\gamma)}+\frac{1}{2}\right\rangle_i &= (1-|z|^2)^{-\frac{\gamma}{2(\delta-\gamma)}+\frac{1}{2}} \sum_{m=1}^{\infty} \left(\frac{\Gamma\left(m-\frac{\gamma}{\delta-\gamma}+1\right)}{m!\, \Gamma\left(-\frac{\gamma}{\delta-\gamma}+1\right)}\right)^{\frac{1}{2}} z^m\wt\psi_{i,m} \label{eq:coherent_state_3} \\
    \left|z; \frac{\delta}{2(\delta-\gamma)}\right\rangle_j &= (1-|z|^2)^{\frac{\delta}{2(\delta-\gamma)}} \sum_{m=0}^{\infty} \left(\frac{\Gamma\left(m+\frac{\delta}{\delta-\gamma}\right)}{m!\,\Gamma\left(\frac{\delta}{\delta-\gamma}\right)}\right)^{\frac{1}{2}} z^m \wt\phi_{j,m} \label{eq:coherent_state_4}
\end{align}

In the case of the QMHO, the coherent states are eigenstates of the lowering operator. Since coupled SUSY behaves so similarly to the QMHO, it is natural to ask what happens to the coupled SUSY coherent states under an application of the lowering operators. The lowering operators are composed of $a$ and $b$, so we want to investigate the action of $a$ and $b$ on the coherent states. To this end, we have the following lemma.

\begin{lem}
Let $\psi_{i,m}$, $\phi_{j,m}$, $\wt\phi_{i,m}$, and $\wt\phi_{j,m}$ be as above. Then
\begin{align}
    a\psi_{i,m} &= \sqrt{m(\delta-\gamma)} \wt{\psi}_{i,m} \label{eq:lowering_1} \\
    a\phi_{j,m} &= \sqrt{(\delta-\gamma)\left(m+\frac{\delta}{\delta-\gamma}\right)} \wt{\phi}_{j,m} \\
    b\wt{\psi}_{i,m} &= \sqrt{(\delta-\gamma)\left(m-\frac{\delta}{\delta-\gamma}\right)} \psi_{i,m-1} \\
    b\wt{\phi}_{j,m} &= \sqrt{m(\delta-\gamma)} \phi_{j,m-1} \label{eq:lowering_4}
\end{align}
\end{lem}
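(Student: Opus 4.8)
The plan is to establish each identity by a direct computation that combines the coupled SUSY relations \eqref{eq:susy_1}, \eqref{eq:susy_2}, the known eigenvalues from Theorem~4, and the definitions of the normalized states $\psi_{i,m}$, $\phi_{j,m}$, $\wt\psi_{i,m}$, $\wt\phi_{j,m}$. Consider \eqref{eq:lowering_1} first. Since $\wt\psi_{i,m} = a\psi_{i,m}/\|a\psi_{i,m}\|$ by definition, the only thing to prove is that $\|a\psi_{i,m}\| = \sqrt{m(\delta-\gamma)}$; equivalently $\langle a\psi_{i,m}, a\psi_{i,m}\rangle = \langle \psi_{i,m}, a^\dagger a\, \psi_{i,m}\rangle = m(\delta-\gamma)$, which is exactly the eigenvalue of $a^\dagger a$ on $\psi_{i,m}$ recorded in Theorem~4 (the $\psi$-tower sits at eigenvalue $m(\delta-\gamma)$). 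So this line is immediate. The same reasoning handles $a\phi_{j,m}$: the $\phi$-tower sits at eigenvalue $m(\delta-\gamma)+\delta = (\delta-\gamma)\bigl(m + \tfrac{\delta}{\delta-\gamma}\bigr)$, so $\|a\phi_{j,m}\|^2$ equals that, giving the stated square root; one must also note $a\phi_{j,m}$ is not annihilated (true since the $\phi$-tower has no state in $\ker a$), so $\wt\phi_{j,m}$ is well-defined and the identity holds by definition of $\wt\phi_{j,m}$.

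For the two $b^\dagger$ identities one has to work slightly harder, since $\wt\psi_{i,m}$ and $\wt\phi_{j,m}$ were defined through $a$, not through $b$. The key computation is $\|b^\dagger \wt\psi_{i,m}\|^2 = \langle \wt\psi_{i,m}, b b^\dagger \wt\psi_{i,m}\rangle = \langle \wt\psi_{i,m}, (aa^\dagger - \delta)\wt\psi_{i,m}\rangle$ by \eqref{eq:susy_2}; since $\wt\psi_{i,m}$ is an eigenfunction of $aa^\dagger$ with the eigenvalue one reads off Theorem~4 for the image of the $\psi$-tower under $a$ (namely $m(\delta-\gamma)$, since $a$ is the SUSY intertwiner and $a^\dagger a$ had eigenvalue $m(\delta-\gamma)$), this evaluates to $m(\delta-\gamma) - \delta = (\delta-\gamma)\bigl(m - \tfrac{\delta}{\delta-\gamma}\bigr)$, matching the claimed coefficient. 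It then remains to identify the \emph{direction} of $b^\dagger \wt\psi_{i,m}$: one checks $b^\dagger \wt\psi_{i,m}$ is an eigenfunction of $a^\dagger a$ (using $a^\dagger a\, b^\dagger = b^\dagger(bb^\dagger+\gamma) = b^\dagger(aa^\dagger - \delta + \gamma)$, which shifts the $aa^\dagger$-eigenvalue down by $\delta-\gamma$), landing it at eigenvalue $(m-1)(\delta-\gamma)$, i.e.\ in the span of $\{\psi_{i,m-1}\}$; that it is the specific vector $\psi_{i,m-1}$ (same index $i$, no mixing) follows because $a$ and $b^\dagger$ intertwine a \emph{single} $\mathfrak{su}(1,1)$ ladder in each tower and $b^\dagger a = (\delta-\gamma)\mathcal K_-$ acts diagonally in the $i$ label on $\psi_{i,\bullet}$. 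The $b^\dagger\wt\phi_{j,m}$ identity is the same argument with the $\phi$-tower eigenvalues: $\|b^\dagger\wt\phi_{j,m}\|^2 = \langle\wt\phi_{j,m},(aa^\dagger-\delta)\wt\phi_{j,m}\rangle$ where $\wt\phi_{j,m}$ has $aa^\dagger$-eigenvalue $m(\delta-\gamma)+\delta$, giving $m(\delta-\gamma)$ and hence the coefficient $\sqrt{m(\delta-\gamma)}$.

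The routine-but-necessary bookkeeping is making sure the normalizing constants are consistent between the two definitions of each tilded state — that is, that $\wt\psi_{i,m}$ obtained as $a\psi_{i,m}/\|a\psi_{i,m}\|$ agrees, up to the phase one is free to fix, with the state one would get by running the $\wt{\mathcal K}_+$ ladder from $\wt\psi_{i,1}$ — and similarly threading the index $i$ (resp.\ $j$) through $a$, $a^\dagger$, $b$, $b^\dagger$ without it mixing with other indices. I expect the main obstacle to be precisely this phase/index consistency: everything is determined only up to unimodular scalars unless one fixes a convention, so the proof should open by declaring that the phases of $\wt\psi_{i,m}$ and $\wt\phi_{j,m}$ are chosen so that \eqref{eq:lowering_1} and \eqref{eq:lowering_2} hold \emph{with positive coefficients} (which is possible exactly because the norms computed above are the claimed positive numbers), and then \eqref{eq:lowering_3} and \eqref{eq:lowering_4} are forced, with the coefficients pinned down by the norm computations and the signs by the fact that $b^\dagger$ and $a$ compose to give a positive operator on each tower. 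Once the conventions are set, each of the four lines is a two-line inner-product computation using \eqref{eq:susy_1}–\eqref{eq:susy_2} and Theorem~4.
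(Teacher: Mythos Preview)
Your proposal is correct and follows the same approach as the paper: compute $\|a\psi_{i,m}\|^2 = \langle \psi_{i,m}, a^\dagger a\,\psi_{i,m}\rangle$ using the known eigenvalue, then invoke the definition $\wt\psi_{i,m} = a\psi_{i,m}/\|a\psi_{i,m}\|$ and a real-phase convention. The paper in fact only writes out the first identity and dispatches the rest with ``the others follow in exactly the same manner,'' handling the phase by declaring it real without loss of generality; your treatment of the $b^\dagger$ cases (noting that $\wt\psi_{i,m}$ was defined through $a$, so one must separately check the direction of $b^\dagger\wt\psi_{i,m}$ and use $bb^\dagger = aa^\dagger - \delta$ for the norm) is more careful than the paper's own proof but not a different method.
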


\begin{proof}
We only prove the first as the others follow in exactly the same manner. We assume without loss of generality that the proportionality is pure real as a global phase does not change the underlying mathematics. By definition, $a\psi_{i,m} = \lambda \wt\psi_{i,m}$, so we need only to solve for $\lambda$.
\begin{align}
    |\lambda|^2 &= \langle a\psi_{i,m}, a\psi_{i,m}\rangle \\
    &= \langle a^*a \psi_{i,m},\psi_{i,m}\rangle \\
    &= m(\delta-\gamma).
\end{align}

\noindent The last equality follows since $\psi_{i,m}$ is an eigenfunction of $a^*a$ with eigenvalue $m(\delta-\gamma)$. Thus $\lambda = \sqrt{m(\delta-\gamma)}$ as desired.
\end{proof}

Since our ladder operators can only relate $\psi_{i,m}$ with $\wt\psi_{i,m'}$ and $\phi_{j,m}$ with $\wt\phi_{j,m'}$ (and vice versa), it is clear that we can only relate \eqref{eq:coherent_state_1} with \eqref{eq:coherent_state_3} and \eqref{eq:coherent_state_2} with \eqref{eq:coherent_state_4} (and vice versa). Making use of relations \eqref{eq:lowering_1}-\eqref{eq:lowering_4}, we have that
\begin{align}
    a\left|z;-\frac{\gamma}{2(\delta-\gamma)}\right\rangle_i &= \sqrt{-\gamma} \frac{z}{\sqrt{1-|z|^2}} \left|z;-\frac{\gamma}{2(\delta-\gamma)}+\frac{1}{2}\right\rangle_i, \\
    b\left|z;\frac{\delta}{2(\delta-\gamma)}\right\rangle_j &= \sqrt{\delta} z\sqrt{1-|z|^2} \left|z;-\frac{\delta}{2(\delta-\gamma)}+\frac{1}{2}\right\rangle_j.
\end{align}

If we were to apply $b$ to the first or $a$ to the second, we would not retain a multiple of the original state as the relations \eqref{eq:lowering_1}-\eqref{eq:lowering_4} indicate. Thus while the coherent states are not eigenstates of the lowering operators $ba$ or $ab$, applying half of one of the lowering operators can convert one coherent state into another---up to a multiplicative factor (since the operators are not unitary). Resolutions of the identity hold for these coherent states which come directly from the resolution of the identity for $\mathfrak{su}(1,1)$ coherent states as these are simply $\mathfrak{su}(1,1)$ coherent states, e.g. taking $\mu_i(z) = -\frac{\delta}{\pi(\delta-\gamma)} \frac{d^2 z}{(1-|z|^2)^2}$, we have that
\begin{equation}
 \int_{D(0,1)} |z\rangle\langle z| \,d\mu_i(z) = I
\end{equation}

\noindent where the identity is interpreted as being the identity on the closed subspace generated by the $\psi_{i,m}$. This can be easily established by considering inner products with $\psi_{i,m}$. As far as the authors are aware, this is a new structure in SUSY and QMHO coherent states.

\section{Uncertainty Principles for Coupled SUSY Systems}

The canonical uncertainty principle in quantum mechanics is the Heisenberg uncertainty principle which is an uncertainty principle between the position operator $x$ and the momentum operator $p$. The Heisenberg uncertainty principle says that, in natural units, the standard deviation in 
position and momentum is bounded below by
$
    \sigma_x \sigma_p \ge \frac{1}{2}.
$

The minimizer of the uncertainty principle is the Gaussian (and translations and modulations thereof). This is easily proved via Cauchy-Schwarz techniques \cite{grochenig}. Since $x$ and $p$ can be written as linear combinations of the QMHO ladder operators (i.e. $x = \frac{1}{\sqrt{2}}(a+a^*)$ and $p = \frac{i}{\sqrt{2}}(a-a^*)$),  we expect to realize uncertainty principles for coupled SUSYs in a similar way via their ladder operators.

\begin{defn}
Let $\{a,b,\gamma,\delta\}$ define a coupled SUSY. We define the following analogues of the traditional position and momentum operators for the separate sectors:
\begin{align}
    \mathcal{L} &= - \frac{1}{2}(a^*b^*+ba), \quad \mathcal{A} = \frac{i}{2}(a^*b^* - ba), \\
    \widetilde{\mathcal{L}} &= -\frac{1}{2}(b^*a^* + ab), \quad \widetilde{\mathcal{A}} = \frac{i}{2}(b^*a^* - ab).
\end{align}
\end{defn}

In the case of the family of examples in Example 1, $\mathcal{L} = -\frac{1}{2} \frac{d}{dx}\frac{1}{x^{2n-2}}\frac{d}{dx} - \frac{1}{2} x^{2n}$ resembles a Lagrangian and is exactly a Lagrangian operator $-\frac{1}{2} \frac{d^2}{dx^2} -\frac{1}{2} x^2$ when $n=1$, whereas $\mathcal{A} = \frac{1}{2}\{x,p\}$ is precisely the dilation operator regardless of $n$. This motivates our notation.

\begin{thm}
Let $\{a,b,\gamma,\delta\}$ be an unbroken coupled SUSY, $\mathcal{L}$ and $\mathcal{A}$ be as above, and, also as above, $\ker a = \{\psi_{i,0}:i\in I\}$ for some index set $I$. An uncertainty principle holds for $\mathcal{L}$ and $\mathcal{A}$ and the minimizers are the states $\psi_{i,0}$.
\end{thm}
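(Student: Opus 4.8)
The plan is to derive the uncertainty principle from the general Cauchy--Schwarz/Robertson inequality applied to the self-adjoint operators $\mathcal{L}$ and $\mathcal{A}$, exactly as one does for $x$ and $p$. First I would compute the commutator $[\mathcal{L},\mathcal{A}]$. Writing $\mathcal{L} = -\tfrac12(\mathcal{K}_+ + \mathcal{K}_-)(\delta-\gamma)$ and $\mathcal{A} = \tfrac{i}{2}(\mathcal{K}_+ - \mathcal{K}_-)(\delta-\gamma)$ in terms of the $\mathfrak{su}(1,1)$ generators from Theorem 1, the relation $[\mathcal{K}_+,\mathcal{K}_-] = -2\mathcal{K}_0$ gives $[\mathcal{L},\mathcal{A}] = i(\delta-\gamma)^2[\mathcal{K}_+ + \mathcal{K}_-, \mathcal{K}_+ - \mathcal{K}_-]/(-4) \cdot 2 = \ldots$ — the upshot is that $[\mathcal{L},\mathcal{A}]$ is a nonzero real multiple of $\mathcal{K}_0 = \tfrac{1}{\delta-\gamma}(a^\dagger a - \tfrac{\gamma}{2})$, hence a positive multiple of $a^\dagger a - \tfrac{\gamma}{2}$ (recall $\gamma \le 0$, so this operator is strictly positive). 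The Robertson inequality then yields $\sigma_{\mathcal{L}}\sigma_{\mathcal{A}} \ge \tfrac12 |\langle [\mathcal{L},\mathcal{A}]\rangle| = c\,\langle \psi, (a^\dagger a - \tfrac{\gamma}{2})\psi\rangle$ for an explicit constant $c>0$ and any unit vector $\psi$ in the common domain.

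Next I would identify when equality holds. Equality in the Cauchy--Schwarz step forces $(\mathcal{L} - \langle\mathcal{L}\rangle)\psi = i t (\mathcal{A} - \langle\mathcal{A}\rangle)\psi$ for some real $t$; combined with the condition that the anticommutator term vanishes (the symmetric part of the covariance), the standard argument shows the minimizer $\psi$ must satisfy $(\mathcal{L} + is\mathcal{A})\psi = (\text{scalar})\psi$ for a suitable real $s$, i.e. $\psi$ is annihilated by $\mathcal{K}_-$ or $\mathcal{K}_+$ (after absorbing the phase), up to the freedom in $s$. Since $\mathcal{K}_- = \tfrac{1}{\delta-\gamma}b^\dagger a$, and $b^\dagger a$ annihilates precisely the states in $\ker a$ together with the $\phi_{j,0}$, I then rule out the $\phi_{j,0}$ family: on $\phi_{j,0}$ the right-hand side $\langle a^\dagger a - \tfrac{\gamma}{2}\rangle$ equals $\delta - \tfrac{\gamma}{2} > 0$, which is strictly larger than the value $-\tfrac{\gamma}{2}$ attained on $\psi_{i,0}$; since minimizing the product $\sigma_\mathcal{L}\sigma_\mathcal{A}$ over the equality locus reduces to minimizing $\langle a^\dagger a - \tfrac{\gamma}{2}\rangle$, the genuine minimizers are exactly the ground states $\psi_{i,0}$. (One checks directly that on $\psi_{i,0}$, since $a\psi_{i,0}=0$, we get $a^\dagger b\,\psi_{i,0}=0$ as well — because $b^\dagger b\,\psi_{i,0} = (a^\dagger a - \gamma)\psi_{i,0} = -\gamma\psi_{i,0}$ forces... actually one verifies $\|b\psi_{i,0}\|^2 = \langle\psi_{i,0}, bb^\dagger\psi_{i,0}\rangle - \delta\|\psi_{i,0}\|^2$; here the SUSY structure $b^\dagger$ annihilates a state needs care — so I would instead verify that $\mathcal{K}_-\psi_{i,0}=0$ directly from $b^\dagger a \psi_{i,0}=0$, which is immediate since $a\psi_{i,0}=0$, and that $\psi_{i,0}$ is an eigenvector of $\mathcal{K}_0$, so the Schwarz inequality is saturated.)

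The main obstacle I anticipate is the equality analysis, not the inequality: one must be careful that the uncertainty product $\sigma_\mathcal{L}\sigma_\mathcal{A}$ is bounded below by a \emph{state-dependent} quantity, so "minimizer" must be interpreted correctly — either as the global infimum of the product over all states (attained at $\psi_{i,0}$, where the bound is smallest \emph{and} saturated), or as those states saturating the Robertson bound \emph{for their own energy}. I would state the theorem in the former sense, show $\sigma_\mathcal{L}\sigma_\mathcal{A} \ge -\tfrac{\gamma}{2(\delta-\gamma)} \cdot (\delta-\gamma)/1 = \ldots$ as the absolute lower bound by combining the Robertson inequality with $\langle a^\dagger a\rangle \ge 0$, and then exhibit $\psi_{i,0}$ as the states achieving it. A secondary technical point is checking that $\psi_{i,0}$ lies in the domain of $\mathcal{L}$ and $\mathcal{A}$ (it does, since $\psi_{i,0}\in\dom a \cap \ker a$ and the domain hypotheses in Definition 1 propagate through $a^\dagger b$), and that on these states $\langle\mathcal{L}\rangle$ and $\langle\mathcal{A}\rangle$ are finite so the variances make sense; both follow from $a^\dagger b\,\psi_{i,0}$ being a scalar multiple of $\psi_{i,1}$, hence normalizable.
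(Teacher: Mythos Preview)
Your approach is essentially the paper's: apply the Robertson inequality, compute $[\mathcal{L},\mathcal{A}]$ (you do it via the $\mathfrak{su}(1,1)$ generators, the paper does it directly from the coupled SUSY relations, arriving at the same expression $\tfrac{\delta-\gamma}{2}(2a^\dagger a - \gamma)$ up to the factor of $i$), and then identify the minimizers as the ground states $\psi_{i,0}$.

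Where you diverge is in the verification that the bound is actually attained and that the $\psi_{i,0}$ are the \emph{only} minimizers. You go through the Cauchy--Schwarz equality condition and try to characterize all saturating states as those annihilated by $\mathcal{K}_-$, then argue that among these only $\psi_{i,0}$ minimizes $\langle a^\dagger a - \gamma/2\rangle$. The paper is much more direct: it simply computes $\langle\mathcal{L}\rangle = \langle\mathcal{A}\rangle = 0$ and $\langle\mathcal{L}^2\rangle = \langle\mathcal{A}^2\rangle = \tfrac14(\delta-\gamma)|\gamma|$ on $\psi_{i,0}$ by hand, using $a\psi_{i,0}=0$ together with $aa^\dagger = bb^\dagger + \delta$ and $b^\dagger b = a^\dagger a - \gamma$, and checks this matches the lower bound. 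For uniqueness, the paper observes that any $\psi$ with $a\psi \neq 0$ has $\langle a^\dagger a\rangle_\psi > 0$, so the Robertson bound itself already exceeds $\tfrac14(\delta-\gamma)|\gamma|$; no equality analysis is needed, and your separate elimination of the $\phi_{j,0}$ is subsumed by this one line.

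One small correction: your parenthetical ``$a^\dagger b\,\psi_{i,0}=0$ as well'' is false --- $a^\dagger b$ is the raising operator, and $a^\dagger b\,\psi_{i,0}$ is a nonzero multiple of $\psi_{i,1}$. You seem to catch this yourself mid-sentence; what you actually need (and later state correctly) is only that $b^\dagger a\,\psi_{i,0}=0$, which is immediate from $a\psi_{i,0}=0$.
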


\begin{proof}
 Let $\psi$ be a normalized wavefunction. Note that Robertson's uncertainty relation gives us that
 \begin{align}
     (\sigma_{\mathcal{L}}\sigma_{\mathcal{A}})_{\psi} & \ge \frac{1}{2}|\langle\psi|[\mathcal{L},\mathcal{A}]|\psi\rangle| \\
     &= \frac{1}{4}|\langle\psi|[a^*b^*,ba]-[ba, a^*b^*]|\psi\rangle| \\
     &= \frac{\delta-\gamma}{4}|2\langle\psi|a^*a|\psi\rangle - \gamma|.
 \end{align}
 
 \noindent Since $\gamma<0$, $a$ has annihilating states $\psi_{i,0}$, and $a^*a$ is self-adjoint, the lower bound given by these states is
 \begin{equation}
     \sigma_{\mathcal{L}}\sigma_{\mathcal{A}} \ge \frac{1}{4}(\delta-\gamma)|\gamma|.
 \end{equation}
 
 This does not guarantee that this lower bound is indeed attained. For the states $\psi_{i,0}$, we have
 \begin{align}
     \langle \mathcal{L}\rangle &= \langle \phi_{i,0}| \mathcal{L}|\psi_{i,0}\rangle \\
     &= \frac{1}{2}\langle \phi_{i,0} |a^*b^*+ba|\psi_{i,0}\rangle \\
     &= 0.
 \end{align}
 
 \noindent Similarly, $\langle \mathcal{A}\rangle = 0$. Evaluating $\langle \mathcal{L}^2\rangle$, making use of the fact that $a$ annihilates the states $\psi_{i,0}$, and employing the coupled SUSY structure, it follows that
 \begin{align}
     \langle\mathcal{L}^2\rangle &= \frac{1}{4}\langle\psi_{i,0}|(ba + a^*b^*)^2|\psi_{i,0}\rangle \\
     &= \frac{1}{4}\langle\psi_{i,0}|baa^*b^*|\psi_{i,0}\rangle \\
     &= \frac{1}{4}\langle\psi_{i,0}|b(b^*b+\delta)b^*|\psi_{i,0}\rangle \\
     &= \frac{1}{4}\langle\psi_{i,0}|(a^*a-\gamma)^2+\delta(a^*a-\gamma)|\psi_{i,0}\rangle \\
     &= \frac{1}{4}(\delta-\gamma)|\gamma|.
 \end{align}
 
 \noindent An identical result holds for $\langle\mathcal{A}^2\rangle$, thus we have that
 \begin{align}
     \sigma_{\mathcal{L}}\sigma_{\mathcal{A}} &= \sqrt{\langle\mathcal{L}^2\rangle - \langle\mathcal{L}\rangle^2}\sqrt{\langle\mathcal{A}^2\rangle - \langle\mathcal{A}\rangle^2} \\
     &= \frac{1}{4}(\delta-\gamma)|\gamma|.
 \end{align}
 
 The states $\psi_{i,0}$ are indeed the \emph{only} minimizers. If $a\psi \neq 0$, then $\langle\psi|a^*a|\psi\rangle > 0$. Since $\gamma \le 0$, it follows that the uncertainty product is strictly greater than $\frac{1}{4}(\delta-\gamma)|\gamma|$.
\end{proof}

\begin{thm}
 Let $\{a,b,\gamma,\delta\}$ define an unbroken coupled SUSY, $\widetilde{\mathcal{L}}$ and $\widetilde{\mathcal{A}}$ be as above, and, also as above, $\ker b^* = \{\wt\phi_{j,0}:j\in J\}$ for some index set $J$. An uncertainty principle holds for $\widetilde{\mathcal{L}}$ and $\widetilde{\mathcal{A}}$ and the minimizers are the states $\wt\phi_{j,0}$.
\end{thm}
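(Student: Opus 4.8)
The plan is to mirror the proof of Theorem~7 almost verbatim, replacing the sector generated by $a^\dagger a$ with the sector generated by $aa^\dagger$ and using the tilded $\mathfrak{su}(1,1)$ triple $\{aa^\dagger - \tfrac{\delta}{2}, ba^\dagger, ab^\dagger\}$ in place of the untilded one. First I would apply Robertson's uncertainty relation to $\widetilde{\mathcal L}$ and $\widetilde{\mathcal A}$, compute the commutator $[\widetilde{\mathcal L},\widetilde{\mathcal A}] = \tfrac{i}{2}[ab^\dagger, ba^\dagger]$, and use the $\mathfrak{su}(1,1)$ relation from Theorem~1 (the tilded version) to rewrite it as a constant multiple of $aa^\dagger$ plus a constant; concretely one finds $[ab^\dagger, ba^\dagger] = 2(\gamma-\delta)\bigl(aa^\dagger - \tfrac{\delta}{2}\bigr)$ by the symmetric computation to the one in the proof of Theorem~1. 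This gives the general lower bound $(\sigma_{\widetilde{\mathcal L}}\sigma_{\widetilde{\mathcal A}})_\psi \ge \tfrac{\delta-\gamma}{4}\bigl|2\langle\psi|aa^\dagger|\psi\rangle - \delta\bigr|$, and since $bb^\dagger = aa^\dagger - \delta \ge 0$ forces $\langle\psi|aa^\dagger|\psi\rangle \ge \delta$ with equality exactly when $b^\dagger\psi = 0$, the candidate minimum value is $\tfrac{1}{4}(\delta-\gamma)\delta$, attained on $\ker b^\dagger = \operatorname{span}\{\wt\phi_{j,0}\}$.

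Next I would verify that this bound is actually achieved on the states $\wt\phi_{j,0}$. As in Theorem~7, first check $\langle\widetilde{\mathcal L}\rangle = \langle\widetilde{\mathcal A}\rangle = 0$ on $\wt\phi_{j,0}$: since $b^\dagger\wt\phi_{j,0} = 0$, the terms $a^\dagger b\,\wt\phi_{j,0}$ appearing via the adjoint vanish, wait---more precisely $\langle\wt\phi_{j,0}|ab^\dagger|\wt\phi_{j,0}\rangle = \langle b^\dagger\wt\phi_{j,0}, a^\dagger\wt\phi_{j,0}\rangle = 0$ and likewise $\langle\wt\phi_{j,0}|ba^\dagger|\wt\phi_{j,0}\rangle = \overline{\langle\wt\phi_{j,0}|ab^\dagger|\wt\phi_{j,0}\rangle} = 0$, so both expectations vanish. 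Then compute $\langle\widetilde{\mathcal L}^2\rangle = \tfrac14\langle\wt\phi_{j,0}|(ba^\dagger + ab^\dagger)^2|\wt\phi_{j,0}\rangle$; expanding and using $b^\dagger\wt\phi_{j,0} = 0$ to kill the $ab^\dagger$ factors on the right and the $ba^\dagger$-then-adjoint factors on the left, only the cross term $\tfrac14\langle\wt\phi_{j,0}|ab^\dagger b a^\dagger|\wt\phi_{j,0}\rangle$ survives; rewriting $b^\dagger b = a^\dagger a - \gamma$ and then using $aa^\dagger\wt\phi_{j,0} = \delta\wt\phi_{j,0}$ (equivalently $a^\dagger a\,(a^\dagger\wt\phi_{j,0}) = \delta(a^\dagger\wt\phi_{j,0})$ via the SUSY intertwining) collapses this to $\tfrac14(\delta-\gamma)\delta$. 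The same computation gives $\langle\widetilde{\mathcal A}^2\rangle = \tfrac14(\delta-\gamma)\delta$, so $\sigma_{\widetilde{\mathcal L}}\sigma_{\widetilde{\mathcal A}} = \tfrac14(\delta-\gamma)\delta$ on these states. Finally, uniqueness: if $b^\dagger\psi \ne 0$ then $\langle\psi|aa^\dagger|\psi\rangle = \langle\psi|bb^\dagger|\psi\rangle + \delta > \delta$, so the Robertson bound is strictly larger than $\tfrac14(\delta-\gamma)\delta$, hence no other state minimizes.

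The one genuine care-point---the main obstacle, though it is more bookkeeping than difficulty---is getting the constant right in $[ab^\dagger, ba^\dagger]$ and tracking which of $\gamma,\delta$ appears where, since the tilded sector naturally pairs with $\delta$ (the shift between $aa^\dagger$ and $bb^\dagger$) rather than $\gamma$; a sign error here would flip the inequality or produce the wrong minimizer. I would pin this down by redoing the Theorem~1 commutator computation with the roles $a\leftrightarrow$ the tilded raising operator, using \eqref{eq:susy_2} to convert $b b^\dagger = aa^\dagger - \delta$ and \eqref{eq:susy_1} where needed. A secondary subtlety is the domain issue: one must note, exactly as in the proof of Theorem~6, that $\widetilde{\mathcal L}\,\wt\phi_{j,0}$ and $\widetilde{\mathcal A}\,\wt\phi_{j,0}$ are normalizable---this follows because $ba^\dagger\wt\phi_{j,0}$ is (up to scalar) an eigenstate of $bb^\dagger$ and hence lies in the domain---so all the inner products written above are finite and the manipulations are legitimate. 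Given these checks, the proof is a direct transcription and I would simply write ``The proof is identical to that of Theorem~7 with the roles of the two sectors interchanged and $\gamma$ replaced by $\delta$'' after recording the value $\tfrac14(\delta-\gamma)\delta$ of the minimal uncertainty product.
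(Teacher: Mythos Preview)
Your proposal is correct and follows exactly the approach the paper indicates: the paper omits this proof entirely, stating only that it is ``nearly identical to that of the previous'' theorem with the resulting bound $\sigma_{\widetilde{\mathcal L}}\sigma_{\widetilde{\mathcal A}} \ge \tfrac{1}{4}(\delta-\gamma)\delta$, and your write-up is precisely that mirrored argument (the theorem you cite as ``Theorem~7'' is Theorem~6 in the paper's numbering, but this is immaterial). Your handling of the commutator via the tilded $\mathfrak{su}(1,1)$ relation, the verification that $\langle\widetilde{\mathcal L}\rangle=\langle\widetilde{\mathcal A}\rangle=0$ and $\langle\widetilde{\mathcal L}^2\rangle=\langle\widetilde{\mathcal A}^2\rangle=\tfrac14(\delta-\gamma)\delta$ on $\wt\phi_{j,0}$, and the uniqueness via $\langle aa^\dagger\rangle=\langle bb^\dagger\rangle+\delta>\delta$ when $b^\dagger\psi\ne 0$ are all sound.
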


The proof of this theorem is nearly identical to that of the previous so we omit it, however the uncertainty principle is now given by
\begin{equation}
    \sigma_{\widetilde{\mathcal{L}}}\sigma_{\widetilde{\mathcal{A}}} \ge \frac{1}{4}(\delta-\gamma)\delta.
\end{equation}

Often in SUSY one treats the first and second sectors simultaneously in a matrix formulation by defining the operators $\mathcal H$, $\mathcal Q$, and $\mathcal Q^*$ which act on the direct sum of the two sectors as follows:
\begin{align}
    \mathcal H = \left(\begin{array}{cc} \mathcal{H}_1 & 0 \\ 0 & \mathcal{H}_2\end{array}\right), \quad \mathcal Q = \left(\begin{array}{cc} 0 & 0 \\ \mathcal{Q}_1 & 0 \end{array}\right), \quad \mathcal Q^* = \left(\begin{array}{cc} 0 & \mathcal{Q}_1^* \\ 0 & 0 \end{array}\right).
\end{align}

The logic being that the joint Hamiltonian should act on the two subspaces separately by their own Hamiltonians and is therefore diagonal, and the joint charge operators should be off-diagonal because $\mathcal{Q}_1$ and $\mathcal{Q}_1^*$ transfer between the two sectors.

This allows us to define a tertiary set of first order position and momentum operators. Previously, the analysis was relegated to second order position and momentum operators because the operators $\mathcal{L}$, $\mathcal{A}$, $\wt{\mathcal{L}}$, and $\wt{\mathcal{A}}$ acted within a sector; however by combining the two sectors into one framework, we allow ourselves the ability to drop down to first order operators, analogous to the usual QMHO case.

\begin{defn}
Again, let $\{a,b,\gamma,\delta\}$ define a coupled SUSY. We define the operators $\mathcal{X}$ and $\mathcal{P}$ on the direct sum of the two sectors as follows:
\begin{equation}
    \mathcal{X} = \frac{1}{\sqrt{2}} \left(\begin{array}{cc} 0 & a^*+b \\ a+b^* & 0 \end{array}\right), \quad \mathcal{P} = -\frac{i}{\sqrt{2}} \left(\begin{array}{cc} 0 & a^*-b \\ -a+b^* & 0 \end{array}\right).
\end{equation}
\end{defn}

For the infinite family of operators in Example 1, $a_n^*+b_n$ yields $\sqrt{2} x^n$ and $a_n^*-b_n$ yields $\frac{\sqrt{2}}{x^{n-1}}\frac{d}{dx}$. Hence $a_n^*+b_n$ extracts the coordinate-like object corresponding to the coupled SUSY, whereas $a_n^*-b_n$ extracts the derivative-like object corresponding to the coupled SUSY. Therefore, 
$\mathcal{X}$ plays the role of a generalized notion of position and $\mathcal{P}$  plays the role of a generalized notion of momentum \cite{kouri2}.

\begin{thm}
Let $\{a,b,\gamma,\delta\}$ define an unbroken coupled SUSY and $\mathcal{X}$ and $\mathcal{P}$ be as above, then the following uncertainty principle holds for $\mathcal{X}$ and $\mathcal{P}$:
\begin{equation}
    \sigma_{\mathcal{X}}\sigma_{\mathcal{P}} \ge \frac{1}{2}\min\{|\gamma|,\delta\}.
\end{equation}
\end{thm}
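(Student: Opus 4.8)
The plan is to obtain the bound from the Robertson uncertainty relation, exactly as in the proofs of Theorems 6 and 7, once the commutator $[\mathcal{X},\mathcal{P}]$ has been identified. For a normalized $\Psi$ in the common domain of $\mathcal{X}$ and $\mathcal{P}$ on the direct sum of the two sectors, Robertson's relation gives
\begin{equation*}
    \sigma_{\mathcal{X}}\sigma_{\mathcal{P}} \ge \tfrac{1}{2}\bigl|\langle\Psi|[\mathcal{X},\mathcal{P}]|\Psi\rangle\bigr|,
\end{equation*}
so everything reduces to computing $[\mathcal{X},\mathcal{P}]$ and then minimizing the right-hand side over all unit vectors $\Psi$.

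First I would carry out the block-matrix product. Since $\mathcal{X}$ and $\mathcal{P}$ are purely off-diagonal, both $\mathcal{X}\mathcal{P}$ and $\mathcal{P}\mathcal{X}$ are block-diagonal, with $(1,1)$-blocks that are products of the form $(a^{\dagger}\pm b^{\dagger})(\mp a+b)$ and $(2,2)$-blocks products of the form $(a\pm b)(a^{\dagger}\mp b^{\dagger})$. Expanding and subtracting, the cross terms $a^{\dagger}b$ and $b^{\dagger}a$ in the first sector (and $ab^{\dagger}$, $ba^{\dagger}$ in the second) --- which are exactly the $\mathfrak{su}(1,1)$ ladder generators of Theorem 1 and do \emph{not} reduce to scalars --- should cancel, leaving a block-diagonal operator with entries $i(a^{\dagger}a-b^{\dagger}b)$ and $i(bb^{\dagger}-aa^{\dagger})$. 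Invoking the coupled SUSY identities \eqref{eq:susy_1} and \eqref{eq:susy_2} then gives $[\mathcal{X},\mathcal{P}]$ equal to $i\gamma$ on the first sector and $-i\delta$ on the second, i.e.
\begin{equation*}
    [\mathcal{X},\mathcal{P}] = i\left(\begin{array}{cc} \gamma & 0 \\ 0 & -\delta \end{array}\right),
\end{equation*}
the coupled-SUSY analogue of the canonical commutator (recovering a multiple of the identity in the harmonic-oscillator case $-\gamma = 1 = \delta$).

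With the commutator in hand, the last step is a convexity estimate. Writing $\Psi = (\psi_1,\psi_2)^{\operatorname{T}}$ with $\|\psi_1\|^2 + \|\psi_2\|^2 = 1$, one finds $\langle\Psi|[\mathcal{X},\mathcal{P}]|\Psi\rangle = i\bigl(\gamma\|\psi_1\|^2 - \delta\|\psi_2\|^2\bigr)$. Since $\gamma \le 0 \le \delta$, the quantity in parentheses is non-positive, so its modulus equals $|\gamma|\,\|\psi_1\|^2 + \delta\,\|\psi_2\|^2$, a convex combination of $|\gamma|$ and $\delta$, and hence is at least $\min\{|\gamma|,\delta\}$, uniformly in $\Psi$. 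Feeding this back into the Robertson bound yields $\sigma_{\mathcal{X}}\sigma_{\mathcal{P}} \ge \tfrac{1}{2}\min\{|\gamma|,\delta\}$, as claimed.

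I expect the one delicate point to be the bookkeeping in the commutator computation: one must track the signs in the two off-diagonal blocks of $\mathcal{P}$ carefully and verify that the $a^{\dagger}b$- and $b^{\dagger}a$-type terms genuinely cancel, since if any survived, $[\mathcal{X},\mathcal{P}]$ would fail to be scalar on each sector and the clean bound would be lost. Domain issues --- that $\Psi$ lies where Robertson's inequality applies --- are handled exactly as in the two preceding theorems. One could additionally note that equality forces $\Psi$ to be concentrated in whichever sector realizes $\min\{|\gamma|,\delta\}$ and to saturate the scalar Robertson bound there, but the theorem as stated needs only the inequality.
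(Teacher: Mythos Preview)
Your proposal is correct and follows essentially the same route as the paper: Robertson's inequality, the block-diagonal commutator $[\mathcal{X},\mathcal{P}] = i\operatorname{diag}(\gamma,-\delta)$ obtained from \eqref{eq:susy_1}--\eqref{eq:susy_2} after the cross terms cancel, and then the convex-combination estimate. The only addition in the paper's proof is an explicit verification that the bound is attained---by $\Psi=(\psi_{i,0},0)^{\operatorname{T}}$ when $|\gamma|\le\delta$ and by $\Psi=(0,\wt\phi_{j,0})^{\operatorname{T}}$ otherwise---which, as you note, is not required by the statement as written.
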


\begin{proof}
Let $\Psi = (\psi_1, \psi_2)^{\operatorname{T}}$ be the state in which we are evaluating the expectation, then $1 = \|\psi_1\|^2 + \|\psi_2\|^2$. Again making use of Robertson's inequality, we have that
\begin{align}
    (\sigma_{\mathcal{X}}\sigma_{\mathcal{P}})_{\Psi} & \ge \frac{1}{2} |\langle\Psi| [\mathcal{X}, \mathcal{P}]|\Psi\rangle| \\
    &= \frac{1}{4} \left|\left\langle\Psi\left|\left[\left(\begin{array}{cc} 0 & a^*+b \\ a+b^* & 0 \end{array}\right),\left(\begin{array}{cc} 0 & a^*-b \\ -a+b^* & 0 \end{array}\right)\right]\right|\Psi\right\rangle\right| \\
    &= \frac{1}{4} \left|\left\langle\Psi\left|\left(\begin{array}{cc} -2\gamma & 0 \\ 0 & 2\delta \end{array}\right)\right|\Psi\right\rangle\right| \\
    &= \frac{1}{2} (|\gamma|\|\psi_1\|^2 + \delta\|\psi_2\|^2).
\end{align}

\noindent Since $\|\psi_2\|^2 = 1-\|\psi_1\|^2$, the above is a convex combination of $|\gamma|$ and $\delta$, so indeed we have that
\begin{equation}
    \sigma_{\mathcal{X}}\sigma_{\mathcal{P}} \ge \frac{1}{2}\min\{|\gamma|,\delta\}.
\end{equation}

We now show that the value of $\frac{1}{2}|\gamma|$ is attainable. Let $\Psi = (\psi_{i,0},0)^{\operatorname{T}}$, where $\psi_{i,0}$ is as above. The case of $\frac{1}{2}\delta$ proceeds similarly by taking $\Psi = (0,\wt\phi_{j,0})^{\operatorname{T}}$, where $\wt\phi_{j,0}$ is also as above. For this choice of $\Psi$, it follows that
\begin{align}
    \langle\Psi|\mathcal{X}|\Psi\rangle &= \frac{1}{\sqrt{2}}(\langle\psi_{i,0} |a^* + b|0\rangle + \langle 0| a+b^*|\psi_{i,0}\rangle) = 0, \\
    \langle\Psi|\mathcal{P}|\Psi\rangle &= -\frac{i}{\sqrt{2}}(\langle\psi_{i,0} |a^*-b|0\rangle + \langle 0|-a+b^*|\psi_{i,0}\rangle) = 0.
\end{align}

\noindent Computing $\mathcal{X}^2$ and $\mathcal{P}^2$ yields
\begin{align}
    \mathcal{X}^2 &= \frac{1}{2}\left(\begin{array}{cc} (a^*+b)(a+b^*) & 0 \\ 0 & (a+b^*)(a^*+b)\end{array}\right) \\
    &= \frac{1}{2}\left(\begin{array}{cc} a^*a + a^*b^* + ba + bb^* & 0 \\ 0 & aa^* + ab + b^*a^* + b^*b\end{array}\right) \\
    \mathcal{P}^2 &= \frac{1}{2}\left(\begin{array}{cc} (a^*-b)(a-b^*) & 0 \\ 0 & (a-b^*)(a^*-b)\end{array}\right) \\
    &= \frac{1}{2} \left(\begin{array}{cc} a^*a - a^*b^* - ba + bb^* & 0 \\ 0 & aa^* - ab - b^*a^* + b^*b \end{array}\right)
\end{align}

Inspecting the diagonal terms, it is clear that this uncertainty principle is quite different from that of $\mathcal{L}$ and $\mathcal{A}$ (and from that of $\widetilde{\mathcal{L}}$ and $\widetilde{\mathcal{A}}$). We are only concerned with the upper left elements since we are considering states of the form $\Psi = (\psi_{i,0}, 0)^{\operatorname{T}}$. Noting that $a$ annihilates $\psi_{i,0}$ and using the coupled SUSY equations, it follows that
\begin{equation}
    \langle \Psi|\mathcal{X}^2|\Psi\rangle = -\frac{1}{2}\gamma = \langle\Psi|\mathcal{P}^2|\Psi\rangle.
\end{equation}

\noindent Since $\gamma \le 0$, the result follows.
\end{proof}

\begin{rem}
The above uncertainty principles agree exactly with the traditional Heisenberg uncertainty principle in the case of the QMHO since $\gamma = -1$ and $\delta = 1$, giving an uncertainty bound of $\frac{1}{2}$ for each with the minimizers being Gaussians. For $n>1$, the uncertainty product $\sigma_{\mathcal{X}}\sigma_{\mathcal{P}}$ has a lower bound of $\frac{1}{2}$, just as in the Heisenberg uncertainty principle with minimizers $\exp(-x^{2n}/2n)$, but the uncertainty products $\sigma_{\mathcal{L}}\sigma_{\mathcal{A}}$ and $\sigma_{\wt{\mathcal{L}}}\sigma_{\wt{\mathcal{A}}}$ are for $n>1$ bounded by a larger constant  since $\delta-\gamma = 2n$.
\end{rem}

\section{Realizing Harmonic Oscillators With Coupled Supersymmetry}

As previously noted, the quantum mechanical harmonic oscillator is a specific instance of a coupled supersymmetry, albeit a somewhat trivial case in which the two coupled SUSY equations are identical. This is not the only manner in which the two are connected. Indeed, a special class of coupled SUSYs may be realized as harmonic oscillator-like systems, i.e. they satisfy the same Lie algebra and by virtue of Stone-von Neumann, may be realized in some way as harmonic oscillators. If one takes $\gamma = -\delta$, then the coupled SUSY equations become
\begin{align}
 a^*a &= bb^* -\delta, \\
 aa^* &= b^*b + \delta.
\end{align}

As in the preceding section, we define matricial operators $\mathcal{A}$ and its adjoint $\mathcal{A}^*$ acting on $\mathfrak{H}_1\times\mathfrak{H}_2$ by
\begin{equation}
 \mathcal{A} = \begin{pmatrix} 0 & b \\ a & 0 \end{pmatrix}, \qquad \mathcal{A}^* = \begin{pmatrix} 0 & a^* \\ b^* & 0 \end{pmatrix}.
\end{equation}

\noindent $\mathcal{A}$ and $\mathcal{A}^*$ are linear combinations of the matricial $\mathcal{X}$ and $\mathcal{P}$ operators from the previous section.

These operators have a simple commutation relation: $[\mathcal{A},\mathcal{A}^*] = \delta I_{\mathfrak{H}_1\times\mathfrak{H}_2}$. It is natural to follow the usual prescription of supersymmetric quantum mechanics by picking the matrix Hamiltonian $\mathcal{H}$ to be
\begin{equation}
 \mathcal{H} = \begin{pmatrix} a^*a & 0 \\ 0 & aa^*\end{pmatrix},
\end{equation}

\noindent however a straightforward calculation shows that appending $\mathcal{H}$ to the Lie algebra generated by $I_{\mathfrak{H}_1\times\mathfrak{H}_2}$, $\mathcal{A}$, and $\mathcal{A}^*$ does not lead to the standard quantum mechanical harmonic oscillator Lie algebra. Define $\mathcal{H}$ instead by
\begin{equation}
 \mathcal{H} = \begin{pmatrix} a^*a & 0 \\ 0 & b^*b \end{pmatrix}.
\end{equation}

\noindent With this choice of $\mathcal{H}$, it follows that $[\mathcal{H},\mathcal{A}] = -\delta \mathcal{A}$ and $[\mathcal{H},\mathcal{A}^*] = \delta \mathcal{A}^*$ and the harmonic oscillator Lie algebra emerges---up to rescaling.

For ease of notation and without much loss of generality, assume that $\dim \ker a = 1=  \dim \ker b$ so that the eigenstates of $a^*a$ can be indexed by $\psi_k$ and $\phi_k$ and similarly for $b^*b$. The operators $\mathcal{X}$ and $\mathcal{P}$ are linear combinations of $\mathcal{A}$ and $\mathcal{A}^*$ and are both self-adjoint, however both are reducible, with invariant subspaces $\overline{\operatorname{span}}(\psi_k,\widetilde{\psi}_{k'})^{\text{T}}$ and $\overline{\operatorname{span}}(\phi_k,\widetilde{\phi}_{k'})^{\text{T}}$, cf. Figure 1. These subspaces are orthogonal to each other and are irreducible. This class of coupled SUSYs restricted to these subspaces gives that the creation and annihilation operators act as the harmonic oscillator creation and annihilation operators by Stone-von Neumann \cite[p. 286]{hall}. Similarly, $\mathcal{H}$ acts as the quantum mechanical harmonic oscillator Hamiltonian on these subspaces.

Thus the class of coupled SUSYs for which $-\gamma = \delta$ can represent a harmonic oscillator with a caveat: the states have two components and there are two ground states. The states $(\psi_0, 0)^{\text{T}}$ and $(0, \widetilde{\phi}_0)^{\text{T}}$ have eigenvalue $0$ under $\mathcal{H}$. These two states generate their respective invariant subspaces, much in the same way that the Gaussian generates the excited states for the harmonic oscillator. Another caveat is that it is not \emph{a priori} obvious that the eigenstates form a basis for $\mathfrak{H}_1\times\mathfrak{H}_2$ unlike in the case of the harmonic oscillator, however one may simply work in the Hilbert space generated by the eigenstates without much loss of generality.

That harmonic oscillators can be realized with coupled SUSYs suggest that coupled SUSY may have applications to general vibrational systems, including molecules, solid state systems, quantum field theory, and elementary particles. Particularly, a quantum field theory built from coupled SUSY in this way is naturally supersymmetric and has a natural correspondence between multiparticle systems and excited states. Coupled SUSYs may be better suited to molecular physics problems involving high order potentials or potentials with steep wells. Note that if instead $-\gamma\neq\delta$, the canonical commutation relations do not quite hold---instead the commutator is a constant diagonal matrix---and the multiparticle interpretation for excited states does not quite hold.

\section{Conclusion and Future Work}

By considering the QMHO in the context of SUSY, the coupled SUSY structure unifying the QMHO and SUSY was developed. Coupled SUSY has many of the desirable properties of both: true ladder operators exist, there are two sectors, and charge operators exist between the sectors. The existence of true ladder operators led to a richer theory for coherent states than what has existed in the past, namely applying a charge operator led to an intertwining in the coherent states structures in both sectors. Moreover, focusing on the case of unbroken coupled SUSY gave some results regarding the spectrum of coupled SUSY Hamiltonians as well as uncertainty principles which generalize the Heisenberg uncertainty principle. A strength of the theory is its background-independence, e.g., the above may apply to any $L^2(\Omega)$ space. $\Omega$ may simply be a locally compact group or even have a manifold structure. Unbroken coupled SUSYs for which $-\gamma =\delta$ can be realized as harmonic oscillators and as such coupled SUSY may be a better model for elementary particles than traditional supersymmetric quantum mechanics. Moreover, coupled SUSY is inherently a multi-particle and multi-dimensional theory and does not suffer some of the pitfalls that traditional SUSY quantum mechanics has for multiple particles or multiple dimensions \cite{kouri4,mandelshtam,kouri5}. It has also been shown that spurious states exist for sector two tensor Hamiltonians for tensorial systems \cite{chia1} In spite of these drawbacks, tensorial multidimensional SUSY quantum mechanics has been used to determine excited-state energies and wavefunctions using adiabatic switching \cite{chia2} in addition to the construction of sector one states via imaginary time propagation \cite{chia3}.

In a future work, the authors intend to explore broken coupled SUSY, the nature of the coupled SUSY algebra, and the spectral theory of the coupled SUSY momentum and position operators. The ladder structure for coupled SUSY appears to have a natural relationship with the group of Lorentz transformations $\operatorname{SO}(2,1)$,
which may be promising for applications to quantum field theory. On the other hand, the perturbed eigenvalue structure in comparison with the 
standard harmonic oscillator may also have implications in solid state theory. More generally, the appearance of Lagrangian and dilation variables is intriguing 
for the formulations of quantum field theories. The appearance of the Lagrangian is also of interest for describing tunnelling processes.

\section{Acknowledgements}

\noindent We thank John R. Klauder for his helpful comments and suggestions in the drafting of this manuscript. We congratulate Professor Michael Baer on his 80th birthday and wish him many more years!

\hfill

\noindent This research was supported in part under Grant E-0608 from the Robert A. Welch Foundation. B.G.B was supported in part by NSF grant DMS-1412524.

\bibliographystyle{plain}
\bibliography{coupled_susy_ref}

\end{document}